\newtheorem{definition}{Definition}
\newtheorem{theorem}{Theorem}
\newtheorem{lemma}{Lemma}
\newcommand{\PC}{\class{PC}}
\newcommand{\PCR}{\class{PCR}}
\newcommand{\Split}{\textrm{Split}}
\newcommand{\BOPOR}{BOP^{\vee}_{\ell,n}}
\newcommand{\rt}{R_{\tau(t)}(t)}
\title{Polynomial Calculus sizes over the Boolean and Fourier bases are incomparable}
\author{Sasank Mouli \\ Mahindra University, Hyderabad \\ and \\ Indian Institute of Technology, Indore \\ sasankm.ucsd@gmail.com}
\date{}
\begin{document}

\maketitle

	\begin{abstract}
	For every $n >0$, we show the existence of a CNF tautology over $O(n^2)$ variables of width $O(\log n)$ such that it has a Polynomial Calculus Resolution refutation over $\{0,1\}$ variables of size $O(n^3\polylog(n))$ but any Polynomial Calculus refutation over $\{+1,-1\}$ variables requires size $2^{\Omega(n)}$. This shows that Polynomial Calculus sizes over the $\{0,1\}$ and $\{+1,-1\}$ bases are incomparable (since Tseitin tautologies show a separation in the other direction) and answers an open problem posed by Sokolov \cite{sok20} and Razborov \cite{razbopen}.
\end{abstract}

\section{Introduction}

Polynomial Calculus (PC), introduced in \cite{cei},  is a well-studied proof system with lower bounds known through a series of works \cite{razbpc}, \cite{ips99}, \cite{bgip}, \cite{alekh-raz}, \cite{galesi2010}, \cite{miksanord}. In spite of progress, these lower bounds have not shed light on limitations of systems such as $\AC^0[p]$-Frege, which was the original motive of studying Polynomial Calculus. A natural subsystem of $\AC^0[p]$-Frege for which lower bounds remained open is PC over $\{+1,-1\}$ variables or the ``Fourier'' basis (instead of the usual $\{0,1\}$ or Boolean basis considered in prior works). After this was highlighted in \cite{imp20}, Sokolov \cite{sok20} proved the first exponential lower bounds for PC over the Fourier basis, which were generalized in subsequent works \cite{imp23}, \cite{dmm23}.

Sokolov \cite{sok20} posed the natural problem of separating PC sizes over the Fourier and Boolean bases, which was also highlighted by Razborov \cite{razbopen}. Since Tseitin tautologies (or unsatisfiable $\mathbb{F}_2$-linear equations) have small Fourier proofs but require large Boolean proofs, he asked whether this is always the case, i.e. if any PCR proof of size $s$ of a CNF formula $F$ over the Boolean basis can always be converted to a proof of $F$  over the Fourier basis of size $s^{O(1)}$. In this work, we answer this question in the negative. We show a family of CNF formulas over $O(n^2)$ variables that has a PCR refutation of size $O(n^3\polylog(n))$, but any PC refutation over $\pm 1$ requires size $2^{\Omega(n)}$. 

The formulae we use are a variant of the Linear Ordering Principle ($LOP_n$), introduced by Krishnamurthy \cite{krish}, who conjectured that they require long Resolution proofs. St\aa lmarck \cite{stal} refuted this by demonstrating short Resolution proofs of these tautologies. They were then used by Bonet-Galesi \cite{bonet-galesi} and Galesi-Lauria \cite{galesi2010} to show optimality of size-width/degree tradeoffs for Resolution and Polynomial Calculus respectively. \cite{negreas} used an $OR$ lifted version of the graph variant of $LOP_n$ to show a separation between PC and PCR sizes. We use an $OR$ lifted version of $BOP_n$, which is obtained by shortening the clauses of $LOP_n$ by using binary extension variables for our purposes. We state our main result below.

\begin{theorem}
	For any $n >0$, there exists a CNF tautology over $O(n^2)$ variables of width $O(\log n)$ which has PCR proofs of size $O(n^3\polylog(n))$ over $\{0,1\}$ but requires PC proofs of size $2^{\Omega(n)}$ over $\{+1,-1\}$.
\end{theorem}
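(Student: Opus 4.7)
The plan is to establish the two halves of the claimed separation separately, following the template of \cite{negreas} for the $OR$-lifted $LOP$ separation of $\PC$ and $\PCR$. For the upper bound, I would show that $\BOPOR$ (the $OR$-lifted binary ordering principle) admits a $\PCR$ refutation of size $O(n^3\,\mathrm{polylog}(n))$ over $\{0,1\}$. This proceeds in two stages. First, adapt the polynomial-size refutations of $LOP_n$ from Bonet--Galesi \cite{bonet-galesi} and Galesi--Lauria \cite{galesi2010} to the binary-encoded variant $BOP_n$: the axioms asserting irreflexivity, totality, and transitivity of a relation encoded in $O(\log n)$ binary variables per position can be chained by a telescoping derivation producing a contradiction in polynomially many lines of polynomial size. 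Second, handle the $OR$-lifting by expressing each $\vee$-gadget via the identity $x \vee y = x + y - xy$ over $\{0,1\}$, which $\PCR$ simulates cheaply using extension variables; the resulting blowup is only polylogarithmic per derivation step and polynomial overall.

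For the lower bound, I would build on Sokolov's framework \cite{sok20} (as extended in \cite{imp23,dmm23}) for Fourier-basis $\PC$ lower bounds. The strategy has three steps: (i) apply a size-to-degree conversion by hitting an alleged small refutation with a random restriction that kills most $OR$-gadgets, leaving a surviving sub-refutation of low Fourier degree; (ii) on the restricted formula, establish a Fourier degree lower bound of $\Omega(n)$ by exhibiting a pseudo-expectation (a ``designed distribution'' in Sokolov's sense) that is consistent with every axiom at degree $o(n)$ yet assigns nonzero value to $1$; and (iii) combine these to conclude a $2^{\Omega(n)}$ size bound. The random restriction is chosen so that with high probability it leaves a linear-sized sub-principle of $BOP_n$ intact, and the pseudo-expectation should exploit the fact that in the $\{+1,-1\}$ basis an $OR$ of $\ell$ variables is an unbalanced function whose natural polynomial representation is a product of $\ell$ linear factors rather than a single extension symbol.

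The main obstacle will be constructing the Fourier degree lower bound for the restricted formula. Unlike Tseitin-type formulas, whose Fourier hardness comes from genuine parity obstructions, $BOP_n$ is a semantic principle whose hardness stems from counting, so the $OR$-lifting must be delicate enough to convert counting hardness into Fourier-degree hardness. I expect the pseudo-expectation to be built inductively by processing axioms in an order consistent with a partial matching on the binary-encoded positions, while tracking how $OR$ behaves over $\{+1,-1\}$: any low-degree interaction with an $OR$-gadget must either ignore most of its inputs (losing the coupling between the lifted and unlifted worlds) or probe enough of them to force a high-degree monomial. Turning this intuition into a clean combinatorial construction that simultaneously respects all transitivity axioms of the lifted $BOP_n$, and showing it survives the random restriction step, is where the bulk of the technical work is expected to lie.
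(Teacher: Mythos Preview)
Your upper-bound sketch is essentially right and matches the paper: one derives $LOP_n^\vee$ from $\BOPOR$ in small size and then invokes the short Resolution refutation of the lifted $LOP_n$ from \cite{negreas}. The lower-bound plan, however, diverges from the paper at both technical steps, and as stated it has genuine gaps.

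First, the size-to-degree step in the paper is not a random restriction. A random restriction that ``kills most $OR$-gadgets'' is the natural thing to try, but over $\{+1,-1\}$ there is no analogue of the Boolean trick where fixing one input of an $OR$ to $1$ collapses the gadget; moreover, restricting variables does nothing to control the Fourier degree of surviving terms in the proof. The paper instead runs two different mechanisms. For terms that \emph{strongly} touch many vertices (some $x_{ijl}$ or $y_{ja}$ present), it iteratively picks a popular vertex $j$, sets the axioms mentioning $j$ to become trivially satisfied, and applies Sokolov's $\Split$ operation on the remaining $j$-variables; by Lemmas~\ref{split2}--\ref{split} this removes a constant fraction of the offending Quadratic terms, and after $O(\ell n)$ rounds plus Lemma~\ref{qdegtodeg} one gets a proof in which no term strongly touches $\ge n/4$ vertices. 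For terms that \emph{lightly} touch many vertices (i.e.\ contain \emph{all} of $x_{ij1},\dots,x_{ij\ell}$ for some $i,j$), the paper uses a \emph{random clustering}: pair the $\ell$ gadget variables into $\ell/2$ new variables and identify each pair. Over $\pm 1$ this is the crucial point --- if two paired variables both occur in $t$, they cancel via $z^2=1$, so a term that previously contained all $\ell$ variables now misses some new variable with high probability (Lemma~\ref{cluster}). This is not a restriction and is specifically tuned to the multiplicative structure of the Fourier basis; your random-restriction step does not achieve this effect.

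Second, the degree lower bound the paper proves is not a Fourier-degree bound and is not obtained via a pseudo-expectation or designed distribution. It is a \emph{special degree} bound (Theorem~\ref{bopdeg}): any refutation of $\BOPOR$ must contain a term touching all $n$ vertices, where ``touching'' is the tailored notion (strongly or lightly) defined above. The proof follows the Galesi--Lauria/Alekhnovich--Razborov residue-operator framework: one defines $R(t)=R_{\tau(t),T^\vee}(t)$ and verifies the three properties in Lemma~\ref{rop} using restriction arguments that set a free vertex $j$ to be the minimum (Lemmas~\ref{rdrop}--\ref{rtech}). Your proposal to build a Sokolov-style pseudo-expectation is, as you yourself flag, the hard part, and the paper simply does not go that route; the counting nature of the ordering principle is handled exactly by the residue machinery rather than by any Fourier-analytic construction. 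In short, the missing ideas are the $\Split$-then-cluster reduction and the special-degree lower bound via residues; the random-restriction-plus-pseudo-expectation template does not fit this formula.
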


\section{Preliminaries}

\begin{definition}[Polynomial Calculus/Polynomial Calculus Resolution]
	
	\label{def:PC-PCR}
	Let $\Gamma = \{P_1 \ldots  P_m\}$ be an unsolvable system of polynomials in variables $ \{x_1 \ldots x_n\}$ over $\mathbb{F}$. A $\PC$ (\textit{Polynomial Calculus}) refutation of $\Gamma$ is a sequence of polynomials $\{R_1 \ldots  R_s\}$ 
	where $R_s=1$, and for every $\ell \in [s]$,
	$R_\ell$ is either a polynomial from $\Gamma$,
	or is obtained from two previous polynomials $R_j, R_k$, $j,k < \ell$, by one of the following derivation rules: 
	\begin{itemize}
		\item[] $R_\ell = \alpha R_j+\beta R_k$ for $\alpha$, $\beta$ $\in$ $\mathbb{F}$
		\item[] $R_\ell = x_iR_k$ for some $i \in [n] $
	\end{itemize}
	
	The size of the refutation is $\sum_{\ell=1}^s{|R_\ell|}$, where $|R_\ell|$ is the number of monomials in the polynomial $R_\ell$. The degree of the refutation is $\max_\ell deg(R_\ell)$.
	
	A $\PCR$ (\textit{Polynomial Calculus Resolution}) refutation is a $\PC$ refutation over the set of Boolean variables $\{x_1 \ldots x_n, \bar{x}_1 \ldots \bar{x}_n\}$ where $\{\bar{x}_1 \ldots \bar{x}_n\}$ are \textit{twin} variables of $\{x_1 \ldots x_n\}$.
	That is, over the $\{0,1\}$ encoding, the equations $x_i^2 - x_i = 0$, $\bar{x_i}^2 - \bar{x_i} = 0$ and $x_i+\bar{x}_i -1= 0$ are treated as axioms.
	Similarly, over the $\pm 1$ encoding, the equations $x_i^2 - 1 = 0$, $\bar{x_i}^2 -1 = 0$ and $x_i\bar{x}_i +1 = 0$ are treated as axioms.

\end{definition}

 \begin{definition}[Quadratic set, Quadratic degree, Quadratic terms over $\pm 1$; taken from \cite{sok20}, Section 3.2]\label{def:qset}
	Given a proof $\Pi$ over $\pm 1$ variables, the Quadratic set of $\Pi$, denoted $\mathcal{Q}(\Pi)$, is the set of pairs of terms $\mathcal{Q}(\Pi) = \{(t_1,t_2) ~|~ t_1,t_2 \in P \text{ for some line } P \in \Pi \}$.\\
	Denote by $\mathcal{QT}(\Pi)$ the set of quadratic terms $\{ t_1t_2 \mid (t_1,t_2)\in \mathcal{Q}(\Pi)\}$, where the product is modulo the axioms $x_i^2 = 1$. \\
	The Quadratic degree of $\Pi$ is the max degree of a term in $\mathcal{QT}(\Pi)$.\\
	Informally, Quadratic degree is the max degree of the square of each line (before cancellations). 
\end{definition}

	\begin{definition}[$\Split$ operation over $x$ \cite{sok20}, Section 5.4]
	\label{def:split-sok20}
	Given a proof $\Pi=(P_1,P_2,\ldots ,P_t)$ and a variable $x \in \{\pm 1\}$, each  line $P_i$ of $\Pi$ is of the form $P_{i,1}x+P_{i,0}$, where $P_{i,1},P_{i,0}$ do not contain $x$. The $\Split$ operation at $x$, denoted by $\Split_x(\Pi)$, is the sequence $\Pi'$ with the lines $\{P_{1,1},P_{1,0}, P_{2,1},P_{2,0}, \ldots , P_{t,1}, P_{t,0}\}$.
\end{definition}

The following lemmas show that applying $\Split_x$ on a proof $\Pi$ gives a valid proof when $x$ does not appear in the axioms of $\Pi$, and it removes from the proof all Quadratic terms containing $x$. They are adapted from \cite{sok20} and \cite{imp23}, with the below versions from \cite{dmm23}

	\begin{lemma}[\cite{sok20}] \label{split2}
	Suppose that $\Pi$ is a proof and $x$ is a variable that does not appear in any axioms of $\Pi$ except $x^2 = 1$. Then $\Split_x(\Pi)$ outputs a valid proof of the axioms of $\Pi$, with no line containing $x$.
\end{lemma}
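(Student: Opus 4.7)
The plan is induction on the line index $\ell$ of $\Pi$, maintaining the invariant that both halves $P_{\ell,1}$ and $P_{\ell,0}$ of the decomposition $P_\ell = P_{\ell,1}x + P_{\ell,0}$ appear as valid PC lines in $\Split_x(\Pi)$, derived only from the axioms of $\Pi$. Since the decomposition is taken modulo $x^2 = 1$, neither half contains $x$, so the ``no line contains $x$'' conclusion is automatic from the construction; what has to be checked is that the resulting sequence is itself a valid PC derivation ending in $1$.

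For the base case, the hypothesis that $x$ does not appear in any axiom of $\Pi$ except $x^2 = 1$ is precisely what we need: a non-$x$ axiom $A$ has trivial split $(0, A)$, while $x^2 - 1$ reduces to $(0, 0)$ modulo $x^2 = 1$, and in each case both halves are immediate from the axioms. For the inductive step I would do a three-way case analysis on the derivation rule. If $P_\ell = \alpha P_j + \beta P_k$ is a linear combination, then $\Split_x$ distributes linearly, giving $P_{\ell,1}$ and $P_{\ell,0}$ as the same linear combinations of earlier split halves. If $P_\ell = y P_k$ for a variable $y \neq x$, then $y$ commutes with the split, so each half is one multiplication step applied to an earlier line. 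The only interesting case is $P_\ell = xP_k$: expanding $x(P_{k,1}x + P_{k,0})$ and reducing using $x^2 = 1$ gives $P_{k,1} + P_{k,0}x$, so multiplication by $x$ simply swaps the two halves, and both resulting lines are already present in $\Split_x(\Pi)$ by the induction hypothesis. Since $\Pi$ ends in $1 = 0\cdot x + 1$, the last block of $\Split_x(\Pi)$ is $(0, 1)$, so the split proof also ends in the contradictory line $1$.

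The main obstacle is the $xP_k$ step: it is the only place where the condition ``$x$ appears in no other axiom'' is actually used, since without being able to invoke $x^2 = 1$ freely we could not collapse the quadratic term $P_{k,1}x^2$ back into a constant-in-$x$ part, and the decomposition itself would not even be canonical. Everything else is mechanical distribution or commutativity; the real content of the lemma is that the axiom $x^2 = 1$ is precisely the ingredient that turns the multiplication-by-$x$ rule into a harmless swap on split halves, which is what allows the split to be carried out line-by-line without introducing any new axioms or any occurrence of $x$.
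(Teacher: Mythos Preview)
Your proof is correct and follows essentially the same induction-on-line-number argument as the paper, with the same case analysis (axiom, linear combination, multiplication by $y\neq x$, multiplication by $x$ giving a swap) and the same conclusion from $P_t=1$. If anything you are slightly more careful than the paper in explicitly handling the axiom $x^2-1$ itself in the base case.
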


	\begin{proof}
		Let $\Pi$ be the sequence $P_1,\ldots, P_t$. 
		We show  by induction on the line number $j$ that both $P_{j,1}$ and $P_{j,0}$ are derivable and $x$-free.
		
		If $P_j$ is an axiom, then it is free of $x$. So the $\Split$ version
		is $P_{j,1}=0$, $P_{j,0}=P_j$, and both these polynomials are derivable.          
		
		If $P_j = \alpha P_i + \beta P_k$ for some $i,k<j$, then
		$P_{j,b}=\alpha P_{i,b}+\beta P_{k,b}$ for $b=0,1$.
		
		If $P_j = y P_i$ for some $i<j$ and some variable $y\neq x$, then
		$P_{j,b}= y P_{i,b}$ for $b=0,1$.
		
		If $P_j = x P_i$ for some $i<j$, then since $x^2=1$ we obtain
		$P_{j,1}= P_{i,0}$ and $P_{j,0}= P_{i,1}$.
		
		Thus all the lines $P_{j,b}$ are derivable and do not contain $x$.
		
		Since the last line of the proof is $P_t=1$, we have $P_{t,1}=0$ and $P_{t,0}=P_t=1$. Thus $\Split_x(\Pi)$ derives 1 and is a valid proof from the axioms of $\Pi$.
	\end{proof}

\begin{lemma}[\cite{sok20}]
	\label{split}
	Let $\mathcal{Q}_x(\Pi)$ be the set of pairs  $(t_1,t_2) \in \mathcal{Q}(\Pi)$ such that $x\in t_1t_2$, and let $\mathcal{QT}_x(\Pi)$ be the corresponding set of quadratic terms. \\
	If $(t_1,t_2)\in \mathcal{Q}(\Split_x(\Pi))$, then $t_1$ and $t_2$ are both $x$-free, and at least one of $(t_1,t_2)$, $(t_1x,t_2x)$, is in $\mathcal{Q}(\Pi)$. Thus 
	$\mathcal{QT}(\Split_x(\Pi)) \subseteq \mathcal{QT}(\Pi) \setminus \mathcal{QT}_x(\Pi)$.
\end{lemma}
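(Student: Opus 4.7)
The plan is to unwind the definition of $\Split_x$ and then do a two-case analysis based on which of $P_{i,1}$ or $P_{i,0}$ a pair of quadratic terms was contributed by. Recall that the lines of $\Split_x(\Pi)$ are exactly the polynomials $P_{i,1}$ and $P_{i,0}$ for $i\in[t]$, and by the decomposition $P_i = P_{i,1}x + P_{i,0}$ (with $P_{i,1},P_{i,0}$ not containing $x$), each such line is already $x$-free. So the moment we observe that a pair $(t_1,t_2)\in \mathcal{Q}(\Split_x(\Pi))$ must arise from $t_1,t_2$ both appearing in a single line of $\Split_x(\Pi)$, the $x$-freeness of $t_1$ and $t_2$ is immediate.

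Next I would do the case analysis. The common line containing $t_1$ and $t_2$ is either some $P_{i,0}$ or some $P_{i,1}$. In the first case, $t_1$ and $t_2$ are both terms of $P_{i,0}$ and hence both terms of $P_i$, so $(t_1,t_2)\in \mathcal{Q}(\Pi)$ directly. In the second case, $t_1$ and $t_2$ are terms of $P_{i,1}$, which means $t_1 x$ and $t_2 x$ are terms of $P_i = P_{i,1}x + P_{i,0}$ (note they cannot cancel against anything in $P_{i,0}$ because that part is $x$-free); hence $(t_1 x, t_2 x) \in \mathcal{Q}(\Pi)$. This yields the first conclusion.

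For the containment $\mathcal{QT}(\Split_x(\Pi)) \subseteq \mathcal{QT}(\Pi) \setminus \mathcal{QT}_x(\Pi)$, I would take any $q \in \mathcal{QT}(\Split_x(\Pi))$, write $q = t_1 t_2$ for a witnessing pair, and apply the previous paragraph. Since $t_1, t_2$ are both $x$-free, so is the product $q$, which shows $q \notin \mathcal{QT}_x(\Pi)$. On the other hand, either $(t_1,t_2)\in\mathcal{Q}(\Pi)$, making $q \in \mathcal{QT}(\Pi)$ directly, or $(t_1 x, t_2 x) \in \mathcal{Q}(\Pi)$, in which case the corresponding quadratic term is $t_1 x \cdot t_2 x = t_1 t_2$ modulo $x^2 = 1$, again landing in $\mathcal{QT}(\Pi)$. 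Combining both facts gives the claimed set containment.

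There is no real obstacle here; the proof is essentially bookkeeping around the definition of $\Split_x$ and $\mathcal{Q}(\cdot)$. The only subtlety worth spelling out is why, when $t_1, t_2$ are terms of $P_{i,1}$, the monomials $t_1 x$ and $t_2 x$ are not killed by cancellations against $P_{i,0}$ in $P_i$ — this is precisely because $P_{i,0}$ is $x$-free by definition of the decomposition, so no $x$-multiple can cancel with anything coming from $P_{i,0}$.
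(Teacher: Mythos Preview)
Your proof is correct and follows essentially the same approach as the paper's: both argue $x$-freeness of $t_1,t_2$ from the definition of the split lines, then do the same two-case analysis on whether the pair comes from $P_{i,0}$ or $P_{i,1}$. You additionally spell out the final containment and the non-cancellation subtlety, which the paper leaves implicit, but the core argument is identical.
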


\begin{proof}
	Consider a pair $(t_1,t_2) \in \mathcal{Q}(\Split_x(\Pi))$. That $t_1,t_2$ are $x$-free follows from \cref{split2}. 
	The pair $(t_1,t_2)$ is contributed to $\mathcal{Q}(\Split_x(\Pi)$ by  $P_b$ for some line $P=xP_1+P_0$ of $\Pi$ and some $b\in\{0,1\}$. If $P_0$ contributes the pair, then $P$ also contributes the pair to $\mathcal{Q}(\Pi)$. If $P_1$ contributes the pair, then $P$ contributes the pair $(t_1x,t_2x)$ to $\mathcal{Q}(\Pi)$.  
\end{proof}

The following lemma from \cite{sok20} shows how to convert from low Quadratic degree to low degree. For completeness we include a proof of it from \cite{dmm23} in the Appendix.
\begin{lemma}[\cite{sok20}, Lemma 3.6] \label{qdegtodeg}
	Let $\Pi$ be a refutation of a set of axioms $F$ of degree $d_0$ with Quadratic degree at most $d$. Then there exists a refutation $\Pi'$ of $F$ with (usual) degree at most $2\max(d,d_0)$.
\end{lemma}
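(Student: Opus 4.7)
The plan is to simulate $\Pi$ line by line in the new proof $\Pi'$ by replacing each line $P_\ell$ with a monomial-shifted version $Q_\ell := m_\ell \cdot P_\ell$, where $m_\ell$ is any fixed term appearing in $P_\ell$. By the quadratic-degree hypothesis, $(m_\ell, t) \in \mathcal{Q}(\Pi)$ for every term $t$ of $P_\ell$ (the pair is witnessed by the line $P_\ell$ itself), so $\deg(m_\ell t) \le d$ and hence $\deg(Q_\ell) \le d$. Since the final line $P_t = 1$ forces $m_t = 1$, the simulation naturally ends with $Q_t = 1$ as required.

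I would prove by induction on $\ell$ that each $Q_\ell$ is derivable in $\Pi'$ from $F$ via a subproof of degree at most $2\max(d, d_0)$. In the axiom case ($P_\ell \in F$, so $\deg(P_\ell) \le d_0$ and hence $\deg(m_\ell) \le d_0$), I would multiply $P_\ell$ by the variables of $m_\ell$ one at a time; any intermediate line $p \cdot P_\ell$, with $p$ a partial product of $m_\ell$, has degree at most $\deg(p) + \deg(P_\ell) \le 2d_0$. For a multiplication step $P_\ell = x P_k$ in $\Pi$, the term $m_\ell$ must be of the form $x t_0$ for some $t_0 \in P_k$, so $Q_\ell = m_\ell P_\ell = t_0 P_k$ using $x^2 = 1$. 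I would derive this from $Q_k = m_k P_k$ by multiplying through by the variables of the symmetric-difference product $m_k t_0$; since both $m_k, t_0 \in P_k$, this multiplier has degree at most $d$ by quadratic degree, and every intermediate term $p \cdot m_k t$ (with $t \in P_k$) has degree at most $\deg(p) + \deg(m_k t) \le d + d = 2d$.

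The addition step $P_\ell = \alpha P_i + \beta P_k$ is the heart of the argument. The plan is to derive $m_\ell P_i$ and $m_\ell P_k$ separately and add them with coefficients $\alpha, \beta$. Since $m_\ell$ has nonzero coefficient in $P_\ell$, it must appear in at least one of $P_i, P_k$; WLOG $m_\ell \in P_i$. Then $m_i, m_\ell \in P_i$ gives $\deg(m_i m_\ell) \le d$, so I derive $m_\ell P_i$ from $Q_i$ by multiplying by $m_i m_\ell$, with every intermediate term again bounded by $\deg(p) + \deg(m_i t) \le d + d = 2d$. The subtle point is bounding $\deg(m_k m_\ell)$ to carry out the analogous derivation for $P_k$. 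This requires a short case split: if $m_\ell \in P_k$ or $m_k \in P_i$, the pair $(m_\ell, m_k)$ lies in a common line ($P_k$ or $P_i$ respectively); otherwise, neither $\alpha m_\ell$ nor $\beta m_k$ can cancel in $P_\ell$, so both $m_\ell, m_k \in P_\ell$. In every case the pair sits in a common line of $\Pi$, so quadratic degree gives $\deg(m_k m_\ell) \le d$.

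The only real obstacle is this case analysis --- it is precisely the step that trades the potentially large individual degrees of the shifts $m_i, m_k, m_\ell$ for their small pairwise products inside a common proof line. Once every $Q_\ell$ is derived with intermediate degree at most $2\max(d, d_0)$, concatenating these derivations yields a refutation $\Pi'$ of $F$ of total degree at most $2\max(d, d_0)$, completing the argument.
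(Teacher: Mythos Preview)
Your proposal is correct and follows essentially the same approach as the paper's proof: replace each line $P_\ell$ by $m_\ell P_\ell$ for a chosen term $m_\ell\in P_\ell$, and use the quadratic-degree bound to control the degree of the multipliers needed to pass between consecutive shifted lines. The only cosmetic difference is that in the multiplication step the paper picks $m_\ell = x\,m_k$ so that $Q_\ell = Q_k$ with no work, whereas you allow an arbitrary $m_\ell$ and carry out the extra (still $\le 2d$) derivation; your addition-step case split is a mild rephrasing of the paper's.
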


\section{Proof of the main theorem}

\subsection{The tautology}
 The tautology we use must be easy for PCR over $\{0,1\}$, but must allow us to prove a lower bound for PC over $\{+1,-1\}$. For this we use a short-width, $OR$-lifted version of the tautology \emph{Linear Ordering Principle} from \cite{bonet-galesi}, \cite{galesi2010}, \cite{negreas} which states that in any total ordering of $n$ elements, there is a minimum element. We describe it formally below.
 
 \begin{definition}[Linear Ordering Principle, $LOP_n$]
 	Suppose that we have a set $\mathcal{A}$ of $n$ elements. Let $x_{ij}$, $i,j \in [n], i \neq j$ denote Boolean variables encoding a total order on the elements in $\mathcal{A}$ where $x_{ij} = 1$ if and only if element $i$ is ordered before element $j$. Then the unsatisfiable formula $LOP_n$ is the following set of $O(n^3)$ clauses over $O(n^2)$ variables, which states that there exists an ordering with no minimum element.
 	
 	$$  \bigvee_{i \neq j} x_{ij} ~~~ \forall j $$
 	$$ \bar{x}_{ij} \vee \bar{x}_{jk} \vee x_{ik} ~~~ \forall i,j,k$$
 	$$\bar{x}_{ij} \vee \bar{x}_{ji} ~~~ \forall i,j$$

 \end{definition}

The first set of clauses asserts that every element $j$ must have some other element ordered before it. We call these the \emph{vertex} axioms, and denote the $j^{th}$ vertex axiom by $V_j$. The next two sets of clauses enforce the total order: transitivity and asymmetry must hold. We call these the \emph{ordering} axioms $T$.

The lemma below states that $LOP_n$ has small sized Resolution refutations. This is originally from \cite{stal}, \cite{bonet-galesi}, but we will use the version from \cite{negreas}

\begin{lemma}[\cite{negreas}, Proposition 7] \label{opshort}
	The formula $LOP_n$ has resolution refutations of size $O(n^3)$, where each clause in the refutation has at most two negative literals. 
\end{lemma}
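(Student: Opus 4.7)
The plan is to derive, for each $k$ from $n$ down to $2$, the clause
$M_k := \bigvee_{i < k} x_{ik}$,
which says that some element with smaller index is below $k$ in the order; applying the same recipe at $k=1$ produces the empty clause, since $M_1$ is empty. The base case $M_n$ coincides with the vertex axiom $V_n$, because the condition $i \neq n$ is the same as $i < n$. For the inductive step, assuming $M_{k+1},\ldots,M_n$ have already been derived, I would obtain $M_k$ by resolving $V_k = \bigvee_{i\neq k} x_{ik}$ against a family of auxiliary clauses $N_{i,k} := \bar x_{ik} \vee M_k$, one for each $i > k$, thereby peeling off every literal $x_{ik}$ with $i > k$.

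The main construction is therefore the derivation of the $N_{i,k}$'s, which I would carry out by an inner induction on $i$ from $k+1$ up to $n$. Start from $M_i = \bigvee_{j < i} x_{ji}$ and resolve with the asymmetry axiom $\bar x_{ki} \vee \bar x_{ik}$ on $x_{ki}$, eliminating the literal $x_{ki}$ and introducing $\bar x_{ik}$, the unique negative literal that will persist. Then, for each remaining $j < i$ with $j \neq k$, resolve the current clause with the transitivity axiom $\bar x_{ji} \vee \bar x_{ik} \vee x_{jk}$ on $x_{ji}$: the $\bar x_{ik}$ contributed by transitivity merges into the $\bar x_{ik}$ already present, keeping a single negative literal, and a new positive literal $x_{jk}$ is added. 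If $j < k$ this literal already belongs to $M_k$ and nothing further is needed; if $k < j < i$ resolve once more with the previously derived $N_{j,k}$ to cancel $x_{jk}$ and fold in a copy of $M_k$. After processing every $j$, the remaining clause is exactly $\bar x_{ik} \vee M_k = N_{i,k}$.

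For the size count, each $N_{i,k}$ is produced in at most $2i$ resolution steps (one asymmetry step, plus at most two transitivity-style steps per value of $j$), so summing over $k < i \leq n$ for a fixed $k$ yields $O(n^2)$ lines; the additional $n-k$ resolutions needed to convert $V_k$ into $M_k$ are absorbed. Summing over $k$ gives the claimed $O(n^3)$ bound, with the final contradiction coming from deriving the empty clause using $V_1$ and the $N_{i,1} = \bar x_{i1}$. For the negative-literal count: the axioms have at most two negatives by inspection, each $M_k$ has none, each $N_{i,k}$ has exactly one, and every intermediate clause in the derivation of $N_{i,k}$ carries only the single negative literal $\bar x_{ik}$, since the second negative of every transitivity axiom used is precisely that same literal.

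The main subtlety, beyond sequencing the inductions so that $M_i$ and the $N_{j,k}$ for $k < j < i$ are available when $N_{i,k}$ is derived, is keeping the count of negative literals at two throughout. The key observation is that the $\bar x_{ik}$ introduced once by the asymmetry step is also the second negative literal of every transitivity axiom applied afterward, so the merge rule prevents further negatives from stacking up; this is precisely why $\bar x_{ik} \vee M_k$, rather than a wider clause, is the correct invariant of the inner induction, and why the overall refutation respects the two-negative-literals bound.
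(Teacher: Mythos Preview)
The paper does not give its own proof of this lemma; it simply cites St\aa lmarck, Bonet--Galesi, and \cite{negreas}. Your argument is correct and is essentially the standard St\aa lmarck refutation: derive the ``partial minimality'' clauses $M_k=\bigvee_{i<k}x_{ik}$ downward in $k$, using the auxiliary clauses $N_{i,k}=\bar x_{ik}\vee M_k$ to strip the literals $x_{ik}$ with $i>k$ from $V_k$, and build each $N_{i,k}$ from $M_i$ via one asymmetry step followed by transitivity steps (plus resolutions with earlier $N_{j,k}$ when $k<j<i$). The crucial observation you single out---that the second negative literal of every transitivity axiom used is always the already-present $\bar x_{ik}$, so merging keeps a single negative literal in every intermediate clause---is exactly why the two-negative-literal bound goes through, and your size accounting is accurate for the clause count.
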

 
 For our purposes we need that the above formula be transformed into one with short initial clauses.
 \begin{definition}[Ordering Principle with binary pointers, $BOP_n$]
 	The tautology $BOP_n$ over variables $x_{ij}$, $i \in [n]$, $j \in [n]$ and variables $y_{ja}$, $j \in [n]$, $a \in [\log n]$, is the following set of clauses, where the expression $(y_i \neq j)$ denotes the clause $\bigvee_{b_a = 0}y_{ia}\bigvee_{b_a = 1}\bar{y}_{ia}$, with $b_{\log n} \dots b_1$ being the binary representation of $j$.
 	
 	$$  (y_j \neq i) \vee x_{ij} ~~~ \forall j \in [n]$$
 	 	$$ \bar{x}_{ij} \vee \bar{x}_{jk} \vee x_{ik} ~~~ \forall i,j,k \in [n]$$
 	$$\bar{x}_{ij} \vee \bar{x}_{ji} ~~~ \forall i,j \in [n]$$

  \end{definition}
 
$BOP_n$ is obtained from $LOP_n$ by using extension variables $y_j$ to reduce the clause width of $LOP_n$. $y_j = i$ forces $x_{ij}$ to be true. Similar to $LOP_n$, we denote by $BV_j$ the set of all axioms $\{(y_j \neq i)\vee x_{ij} , i \in [n]\}$, and by $T$ the set of all ordering axioms. $BOP_n$ is of width $O(\log n)$ and has a similar number of variables and clauses as $LOP_n$. Below we define the $OR$ lifted version $BOP^{\vee}_{\ell, n}$ of $BOP_n$.

\begin{definition}
	For a constant $\ell > 0$, let $BOP^{\vee}_{\ell,n}$ denote the tautology obtained from $BOP_n$ by replacing each $x_{ij}$ by an $OR$ of $\ell$ new variables $x_{ij1} \dots x_{ij\ell}$. The following are clauses of $\BOPOR$ :
	
		$$  (y_j \neq i) \bigvee (\vee_l x_{ijl}) ~~~ \forall j \in [n]$$
	$$ \bar{x}_{ijl} \bigvee \bar{x}_{jkl} \bigvee (\vee_l x_{ikl}) ~~~ \forall i,j,k \in [n] ,l \in [\ell]$$
	$$\bar{x}_{ijl} \bigvee \bar{x}_{jil} ~~~ \forall i,j \in [n] ,l \in [\ell]$$
	
\end{definition}  

Let $BV_j^\vee$ and $T^\vee$ denote the $j^{th}$ vertex axioms and the ordering axioms respectively of $\BOPOR$. One of our contributions is a \emph{special degree}  lower bound for PC proofs of $\BOPOR$, where we define this notion carefully. This is along the lines of a similar bound proved in \cite{galesi2010} for the ordering principle based on expander graphs. We state the lower bound here and defer the proof to the next section. To state the bound, we need the following definitions of special degree.

\begin{definition}
	Let $t$ be a term in the variables of $\BOPOR$. We say that a vertex $j$ is strongly touched by $t$ if $x_{ijl} \in t$ for some $i,l$ or $y_{ja} \in t$ for some $a$. We say that a vertex $i$ is lightly touched (through $j$) if for some $j$ and for all $l$, $x_{ijl} \in t$. The set of all touched vertices, denoted by $\tau(t)$, is the set of all vertices that are both strongly and lightly touched by $t$. We sometimes refer to $|\tau(t)|$ as special degree.
\end{definition}

\begin{theorem}[Special PC degree lower bound for $\BOPOR$] \label{bopdeg}
	In any PC refutation of the tautology $\BOPOR$, there must exist a term $t$ such that $|\tau(t)| = n$.
\end{theorem}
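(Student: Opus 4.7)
The plan is to prove this lower bound by the random-ordering / designer-functional technique developed for the standard ordering principle (Bonet-Galesi, Galesi-Lauria) and adapted here to the binary-pointer and OR-lifted setting, with the notion of special degree replacing ordinary degree. I would begin by defining a family of candidate total assignments parameterized by a vertex $v\in[n]$ and a total order $\sigma$ on $[n]$ with $\sigma$-minimum $v$: let $\rho_\sigma$ set $x_{ijl}=1$ iff $\sigma(i)<\sigma(j)$, and set each pointer $y_j$ for $j\neq v$ to encode $j$'s $\sigma$-predecessor, while leaving $y_v$ unassigned. A direct check shows that on any such $\rho_\sigma$ every ordering axiom in $T^\vee$ is satisfied, and every vertex axiom $BV_j^\vee$ with $j\neq v$ is satisfied by the choice of $y_j$; only $BV_v^\vee$ (which depends on the free $y_v$) is in question. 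The aim is to build a signed weighting of these assignments into a linear functional $E$ that (i) vanishes on every axiom (cancelling the lone contribution of $BV_v^\vee$ by averaging over the choice of $v$), (ii) satisfies $E[1]=1$, and (iii) $E[t]=0$ whenever $|\tau(t)|<n$.

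The combinatorial heart of the argument is the following claim: if a term $t$ satisfies $|\tau(t)|<n$, then there is a vertex $v\notin\tau(t)$ such that some $\rho_\sigma$ with $\sigma$-minimum $v$ evaluates every variable of $t$ to $1$. The choice of $v$ uses the asymmetry of $\tau(t)$: if $v$ is not strongly touched then $t$ contains no $x_{ivl}$ or $y_{va}$, so placing $v$ at the bottom of the order imposes no zero-constraint on $t$; if $v$ is not lightly touched then for every $j$ there is some $l$ with $x_{vjl}\notin t$, so making $v$ the minimum (which forces all $x_{vjl}=1$ anyway) remains compatible with $t$. The OR-lifting is crucial here: it is precisely the $\ell$ copies that make the \emph{lightly touched} condition nontrivial, so that multiplying $t$ by a single $x_{vjl}$ cannot single-handedly flip $v$ from outside to inside $\tau$. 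An inductive argument on $n-1$ vertices would then handle the remaining variables of $t$ after $v$ is chosen, by recursing on the sub-instance obtained by removing $v$ and using that this reduction only shrinks $\tau$.

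The remainder of the plan is to verify that $E$ as constructed respects the PC inference rules: linearity of $E$ is immediate, and the variable-multiplication rule reduces to showing $E[x\cdot p]=0$ whenever every term in $x\cdot p$ has special degree $<n$, which follows from the combinatorial claim above together with consistency of the chosen weighting. Given this, any PC refutation $\Pi$ in which every term $t$ satisfies $|\tau(t)|<n$ would yield $E[R_\ell]=0$ for every line $R_\ell$ by induction on the derivation, contradicting $E[1]=1$.

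The main obstacle will be handling the multiplication rule alongside the precise definition of the weights in $E$: multiplying by a variable can both add a strong touch and complete an OR-block simultaneously, so the recipe for $E$ must be designed so that the resulting increment in $\tau$ is carefully controlled. Getting the signs right so that the contribution of $BV_v^\vee$ cancels across different choices of $v$, while simultaneously preserving the vanishing of $E$ on low-special-degree terms, is the technical crux and will likely demand the inductive reduction to smaller instances outlined above.
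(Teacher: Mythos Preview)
Your proposal has a genuine gap. A linear \emph{functional} $E$ on polynomials can certify Nullstellensatz degree lower bounds, but not Polynomial Calculus ones, and the specific properties you list cannot coexist. First, (ii) and (iii) clash literally, since $\tau(1)=\emptyset$ and hence $|\tau(1)|<n$. Even reading (iii) as ``$E[t]=0$ for every \emph{non-constant} monomial with $|\tau(t)|<n$'', the three conditions are still inconsistent: since $\BOPOR$ is unsatisfiable, at least one axiom polynomial $A$ has nonzero constant term (otherwise the all-zero assignment would satisfy every axiom), while every non-constant monomial appearing in any axiom touches at most three vertices; hence $E[A]$ would equal that nonzero constant times $E[1]\neq 0$, contradicting (i). More conceptually, your treatment of the multiplication rule --- ``$E[x\cdot p]=0$ because every monomial of $x\cdot p$ has small $\tau$'' --- never invokes the inductive hypothesis $E[p]=0$ at all; it collapses the dynamic PC derivation to a static Nullstellensatz-type statement, and that is precisely why no such $E$ can exist.

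The paper instead follows the Alekhnovich--Razborov / Galesi--Lauria template and constructs a linear \emph{operator} $R$ (polynomials to polynomials), not a functional. For a term $t$ one sets $R(t)$ to be the residue of $t$, with respect to a graded term order, modulo the ideal generated by $T^\vee$ together with the vertex axioms $\{BV^\vee_j : j\in\tau(t)\}$. One then shows $R(A)=0$ for every axiom, $R(1)=1$, and the multiplicativity $R(wP)=R\bigl(w\cdot R(P)\bigr)$ whenever every term $wt$ of $wP$ satisfies $|\tau(wt)|<n$; this last property is what legitimately propagates $R(P)=0$ to $R(wP)=0$ along a PC derivation. The key technical step is the locality lemma $R_{\tau(wt)}(t)=R_{\tau(t)}(t)$, proved by applying a partial restriction that makes an untouched vertex the minimum of the order --- the same combinatorial idea you sketch, but used to compare residues inside the ideal-reduction framework rather than to weight total assignments.
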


On the other hand, since $  \forall i ~ (y_j \neq i) \vee x_{ij}$ implies $\bigvee_{i}x_{ij}$ we can derive $LOP_n$ from $BOP_n$ in size $O(n)$ (since the number of variables in $y_j$ is logarithmic). \cite{negreas} show that the $OR$ lifted version of $LOP_n$ also has small Resolution refutations.

\begin{lemma}[\cite{negreas}, Lemma 9]
	The formula $LOP_n^{\vee}$, obtained by lifting each variable of $LOP_n$ by an OR of $\ell$ variables, has a Resolution refutation of size $O(n^3\poly(\ell))$.
\end{lemma}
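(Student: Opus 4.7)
The plan is to simulate the Resolution refutation of $LOP_n$ from \cref{opshort} line by line inside $LOP_n^\vee$, exploiting the property that every clause of the original refutation carries at most two negative literals. For each such clause $C = \bar{x}_{p_1q_1} \vee \bar{x}_{p_2q_2} \vee \bigvee_r x_{i_r k_r}$, I would define its \emph{lifted form} $\tilde{C}$ to be the (at most) $\ell^2$ clauses obtained by replacing each positive literal $x_{ij}$ by $\bigvee_{l \in [\ell]} x_{ij,l}$ and letting $(l_1, l_2) \in [\ell]^2$ range over the two negatives. This is the standard CNF encoding of $C$ under the semantic substitution $x_{ij} \mapsto \bigvee_l x_{ij,l}$, and the lifted forms of the axioms of $LOP_n$ are, by construction, precisely the axioms of $LOP_n^\vee$, so the base case is immediate.

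To simulate the resolution step producing $C_3 = A \vee B$ from $C_1 = A \vee x_{ij}$ and $C_2 = B \vee \bar{x}_{ij}$, fix an assignment $(\mathbf{l}_A, \mathbf{l}_B)$ to the combined (at most two) negative literals of $A$ and $B$. From $\tilde{C_1}$ we have the clause $N_A(\mathbf{l}_A) \vee P_A^{\vee} \vee \bigvee_{l \in [\ell]} x_{ij,l}$; from $\tilde{C_2}$ we have, for each $l \in [\ell]$, the clause $N_B(\mathbf{l}_B) \vee P_B^{\vee} \vee \bar{x}_{ij,l}$. Resolving the first clause successively against these $\ell$ clauses on the variables $x_{ij,1}, \ldots, x_{ij,\ell}$ yields, after $\ell$ resolution steps, the target lifted clause $N_A(\mathbf{l}_A) \vee N_B(\mathbf{l}_B) \vee P_A^{\vee} \vee P_B^{\vee}$, which is one of the clauses of $\tilde{C_3}$. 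Since there are at most $\ell^2$ choices of $(\mathbf{l}_A, \mathbf{l}_B)$, producing all of $\tilde{C_3}$ costs $O(\ell^3)$ resolution steps per step of the original refutation.

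Summing over the $O(n^3)$ steps of the refutation from \cref{opshort} yields a Resolution refutation of $LOP_n^\vee$ of size $O(n^3 \ell^3) = O(n^3\poly(\ell))$. The main obstacle, and the reason we invoke \cref{opshort} in its strengthened form rather than an arbitrary short Resolution refutation of $LOP_n$, is that the size of a lifted form $\tilde{C}$, and hence the per-step simulation cost, is exponential in the number of negative literals of the original clause $C$; without the bound of two negatives per clause, the lift would introduce a blowup that is super-polynomial in $\ell$.
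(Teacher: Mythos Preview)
Your argument is correct and is precisely the standard lifting-by-simulation proof; this is essentially the argument from \cite{negreas} that the paper cites (the paper itself does not give a proof of this lemma, only the reference). In particular, you have correctly identified the crucial role of the ``at most two negative literals'' guarantee from \cref{opshort}: without it the CNF encoding of a lifted clause would have $\ell^{\#\text{negatives}}$ many clauses and the simulation would blow up. One minor point worth making explicit: when $A$ and $B$ share a negative literal, your phrase ``assignment $(\mathbf{l}_A,\mathbf{l}_B)$ to the combined (at most two) negative literals'' must be read as assigning a single index to each literal in the \emph{union} of the negatives of $A$ and $B$ (which equals the negatives of $C_3$, hence at most two), so that $N_A$ and $N_B$ agree on the shared literal; this is what keeps the count at $\ell^2$ rather than $\ell^3$ and ensures the derived clause is exactly a clause of $\tilde{C_3}$ rather than a weakening.
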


\begin{lemma}[Short Resolution refutations of $\BOPOR$]
	$\BOPOR$ has Resolution refutations of size $O(n^3\poly(\ell))$
\end{lemma}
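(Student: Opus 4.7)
The plan is to reduce refutation of $\BOPOR$ to the Resolution refutation of $LOP_n^\vee$ cited above. Observe that $\BOPOR$ agrees with $LOP_n^\vee$ on the ordering axioms $T^\vee$; the only difference is that $\BOPOR$ expresses each vertex axiom through the $n$ narrow binary-pointer clauses $BV_j^\vee = \{(y_j \neq i) \vee (\vee_l\, x_{ijl})\}_{i \in [n]}$, whereas $LOP_n^\vee$ has the single wide clause $\vee_{i \neq j,\, l}\, x_{ijl}$ per $j$. Hence, if the wide vertex clauses can be derived cheaply from $BV_j^\vee$, the $O(n^3\poly(\ell))$-size refutation of $LOP_n^\vee$ can then be invoked essentially as a black box.

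To derive the wide vertex clause $\vee_{i,\, l}\, x_{ijl}$ from $BV_j^\vee$ for a fixed $j$, I would eliminate the $\log n$ variables $y_{j,1}, \ldots, y_{j,\log n}$ in bit order via a balanced binary tree of resolution steps. The key observation is that the $n$ ``pointer tails'' $(y_j \neq i)$ exhaust every Boolean assignment to $y_j$, so the standard tree-like refutation of $\bigwedge_i (y_j \neq i)$ has size $O(n)$. Carrying along the $x_{ijl}$-literals through each resolution step, the same tree derives $\vee_{i,\, l}\, x_{ijl}$ at its root. This costs $O(n)$ clauses per vertex $j$, hence $O(n^2)$ clauses to produce all wide vertex axioms.

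Appending the $O(n^3\poly(\ell))$-size Resolution refutation of $LOP_n^\vee$ from \cite{negreas} then produces a refutation of $\BOPOR$ of total size $O(n^3\poly(\ell))$, as claimed. The main bookkeeping is verifying that each bit-elimination step is a valid resolvent and that the tree genuinely covers the $n$ binary axioms (immediate when $n$ is a power of two; otherwise pad $y_j$ or handle the few extra assignments, changing only constants). I do not expect any substantive obstacle beyond this setup, since the derivation of the wide vertex clauses is a completely modular pre-processing step that does not interact with the ordering axioms or with the subsequent $LOP_n^\vee$ refutation.
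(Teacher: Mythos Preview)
Your proposal is correct and is essentially the paper's own argument: derive each wide vertex clause $\bigvee_{i,l} x_{ijl}$ from the $n$ narrow axioms $(y_j\neq i)\vee(\vee_l x_{ijl})$ by a tree of resolutions on the $\log n$ bits of $y_j$, then invoke the $O(n^3\poly(\ell))$ refutation of $LOP_n^\vee$ from \cite{negreas}. The paper phrases the first step as ``derive $LOP_n$ from $BOP_n$ in size $O(n)$ and then substitute the OR gadget,'' but this is the same computation you describe directly in $\BOPOR$.
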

\begin{proof}
	As stated above, $LOP_n$ can be derived from $BOP_n$ in Resolution size $O(n)$. Substituting $OR$ gadget variables for the vertex variables of $BOP_n$, we obtain a derivation of $LOP_n^\vee$ from $\BOPOR$. We now use the previous lemma.
\end{proof}

Part of our lower bound proof uses a random clustering argument on the gadget variables, where we pair variables $x_{ij1} \dots x_{ij\ell}$ randomly and assign each pair to a new variable $z_{ijl}$ to obtain a smaller set of gadget variables $z_{ij1} \dots z_{ij\ell/2}$. The following lemma shows that any term $t$ of degree at least $\ell/2$ in variables $x_{ijl}$ in the $\pm 1$ basis reduces to a term that misses at least one variable $z_{ijl}$ with high probability after random clustering.

\begin{lemma}\label{cluster}
	Suppose that $t$ is a term over $\{+1,-1\}$ in variables $x_{ij1} \dots x_{ij\ell}$ of degree at least $\ell/2$. We iteratively pick a pair of variables $x_{ijl_1}$ and $x_{ijl_2}$ at random and assign them to a new variable $z_{ijl}$, so that $t$ is now a term over variables $z_{ij1} \dots z_{ij\ell/2}$. Then with high probability for a large enough $\ell$, $t$ does not contain at least one variable from $z_{ij1} \dots z_{ij\ell/2}$.
\end{lemma}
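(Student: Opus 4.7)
The plan is to reduce this to a counting question about uniformly random perfect matchings on $[\ell]$. I would first fix notation: let $S = \{l \in [\ell] : x_{ijl} \in t\}$, so $|S| = d \geq \ell/2$, and view the random clustering as sampling a uniformly random perfect matching $M$ on $[\ell]$, where each pair in $M$ is replaced by a fresh variable $z_{ijl}$. The key observation (which uses the $\pm 1$ axiom $z_{ijl}^2 = 1$) is that after the clustering, $z_{ijl}$ appears in $t$ if and only if exactly one of the two $x$-variables in its pair lies in $S$; if both lie in $S$ they multiply to $1$, and if neither does the variable is absent to start with.

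With this setup, I would split into two cases. If $d > \ell/2$, then since there are only $\ell/2$ pairs in $M$, the pigeonhole principle forces some pair to consist of two elements of $S$, and the corresponding $z_{ijl}$ is killed by $z_{ijl}^2 = 1$. So the conclusion holds deterministically. The remaining case is $d = \ell/2$, where the bad event ``every $z_{ijl}$ survives in $t$'' is equivalent to saying that $M$ is a perfect bipartite matching between $S$ and $[\ell] \setminus S$.

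In the bipartite case, the number of such matchings is $(\ell/2)!$, while the total number of perfect matchings on $[\ell]$ is $(\ell-1)!! = \ell!/(2^{\ell/2} (\ell/2)!)$. Hence the probability that all $z_{ijl}$ survive is
\[
\frac{(\ell/2)!\, (\ell/2)!\, 2^{\ell/2}}{\ell!} \;=\; \frac{2^{\ell/2}}{\binom{\ell}{\ell/2}} \;=\; O\!\left(\sqrt{\ell}\,\cdot 2^{-\ell/2}\right),
\]
using the standard estimate $\binom{\ell}{\ell/2} = \Theta(2^{\ell}/\sqrt{\ell})$ coming from Stirling's formula. This is exponentially small in $\ell$, so for sufficiently large $\ell$ the complementary event (at least one $z_{ijl}$ is absent from $t$) occurs with high probability, completing the proof.

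I do not anticipate a real obstacle here: the whole argument rests on recognizing that squaring to $1$ in the Fourier basis turns the ``bad'' configuration into a bipartite perfect matching problem, after which the bound is just a one-line Stirling computation. The only thing one has to be careful about is the clean split into $d > \ell/2$ (pigeonhole, deterministic) versus $d = \ell/2$ (exact counting of bipartite matchings); conflating these cases would obscure that the probability is actually $0$ once the degree strictly exceeds half the gadget size.
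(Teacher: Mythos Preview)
Your argument is correct and rests on the same core observation as the paper: after substitution, $t$ can contain every $z_{ijl}$ only if each variable in $S$ is matched with a variable outside $S$, i.e.\ only if the random matching is bipartite between $S$ and its complement. The paper's proof stops at this observation and asserts the bound $(3/4)^{\ell/2}$ without further computation. You go further: you separate the deterministic case $d>\ell/2$ (pigeonhole) from the case $d=\ell/2$, and in the latter you count bipartite perfect matchings exactly to obtain the sharper bound $2^{\ell/2}/\binom{\ell}{\ell/2}=O(\sqrt{\ell}\,2^{-\ell/2})$. So the route is the same, but your execution is more careful and yields a quantitatively better estimate; either bound suffices for the application in the main theorem.
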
 
\begin{proof}
	We observe that if the random process pairs any two variables $x_{ijl_1}, x_{ijl_2} \in t$, then the corresponding new variable $z_{ijl}$ does not appear in $t$ after substitution. Therefore, $t$ contains all the variables $z_{ijl}$ only if every variable in $t$ is paired with a variable outside of $t$. The probability that this happens is at most $(3/4)^{\ell/2}$.
\end{proof}

We are ready to prove our size lower bound for PC refutations of $\BOPOR$ over $\pm 1$.

\begin{theorem}[Main theorem]
	For any $n > 0$ and $\ell > 10 \log n$, any refutation of $\BOPOR$ in PC over $\pm 1$ requires size $2^{\Omega(n)}$
\end{theorem}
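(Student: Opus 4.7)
My plan is to argue by contradiction using the random clustering of Lemma~\ref{cluster}. Suppose $\Pi$ is a $\pm 1$-PC refutation of $\BOPOR$ of size $s$. I will construct, with positive probability, a $\pm 1$-PC refutation $\Pi'$ of $BOP^\vee_{\ell/2, n}$ that contains no term $t'$ with $|\tau(t')| = n$; this will contradict Theorem~\ref{bopdeg} applied with gadget width $\ell/2$.

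The first step is to apply a random per-block clustering: independently for each $(i,j) \in [n]^2$, I choose a uniformly random perfect matching $M_{ij}$ of $[\ell]$, index its pairs by $l \in [\ell/2]$, and perform the substitution $x_{ij l_1} = x_{ij l_2} = z_{ij l}$ whenever $\{l_1, l_2\}$ is the $l$-th pair of $M_{ij}$. Each line of $\Pi$ becomes a polynomial in the new $z$ variables and the old $y$ variables, giving a sequence $\Pi'$. A key sanity check is that under this substitution every axiom of $\BOPOR$ maps to an axiom of $BOP^\vee_{\ell/2, n}$: the vertex axioms match directly, and the (fully OR-lifted) transitivity and asymmetry clauses collapse to their $\ell/2$-versions. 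Since the PC derivation rules commute with variable substitution, $\Pi'$ is a valid refutation of $BOP^\vee_{\ell/2, n}$.

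Next I bound the probability that a fixed term $t' \in \Pi'$ has $|\tau(t')| = n$. For this to hold, every $v \in [n]$ must be lightly touched in $t'$: some $j_v$ must satisfy $z_{v j_v l} \in t'$ for all $l \in [\ell/2]$. By Lemma~\ref{cluster} applied to the $(v,j)$-block, the probability that all $\ell/2$ of the $z_{vjl}$ lie in $t'$ is at most $(3/4)^{\ell/2}$, and a union bound over the $n$ choices of $j$ gives $\Pr[v \text{ lightly touched in } t'] \leq n (3/4)^{\ell/2} \leq n^{-1}$, using $\ell > 10 \log n$. Crucially, for distinct $v \neq v'$ the blocks $(v,\cdot)$ and $(v',\cdot)$ share no variables, so the matchings they depend on are independent, and hence
$$\Pr[|\tau(t')| = n] \leq \prod_{v=1}^n \Pr[v \text{ lightly touched in } t'] \leq n^{-n} = 2^{-n \log n}.$$

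A union bound over the at most $s$ terms of $\Pi'$ gives total probability at most $s \cdot 2^{-n \log n}$. For $s < 2^{n \log n}$ this is strictly less than $1$, so some realization of the clustering yields a $\Pi'$ with no term of special degree $n$, contradicting Theorem~\ref{bopdeg}. Therefore $s \geq 2^{\Omega(n \log n)} \geq 2^{\Omega(n)}$, as claimed. I expect the main obstacle to be the syntactic verification in the second step --- that per-block independent matchings really map the axioms of $\BOPOR$ exactly onto those of $BOP^\vee_{\ell/2, n}$ --- which depends on reading the transitivity and asymmetry axioms in their full OR-lifted form and on carefully tracking how the substitution interacts with the random pair labels across different $(i,j)$-blocks.
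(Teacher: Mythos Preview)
Your argument has a genuine gap in the third paragraph. You assert that ``for $|\tau(t')|=n$ to hold, every $v\in[n]$ must be lightly touched in $t'$,'' but this is not what the definition of $\tau$ says. The set $\tau(t')$ is the union of strongly and lightly touched vertices, and a vertex $j$ is strongly touched as soon as a \emph{single} variable $z_{ijl}$ (for some $i,l$) or $y_{ja}$ (for some $a$) appears in $t'$. Random clustering does nothing to control strong touching: the $y$ variables are not touched by the substitution at all, and if $x_{ijl}\in t$ then (barring a lucky cancellation within that one block) the corresponding $z$-variable is still in $t'$. Concretely, a single monomial such as $\prod_{j=1}^{n} y_{j,1}$ has every vertex strongly touched and hence $|\tau|=n$, both before and after clustering, with probability~$1$. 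Your probability estimate $\Pr[|\tau(t')|=n]\le n^{-n}$ is therefore false.

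This is precisely why the paper's proof has two phases. The first phase deals with strongly touched vertices via the $\Split$ operation together with Sokolov's quadratic-degree-to-degree conversion (Lemmas~\ref{split2}, \ref{split}, \ref{qdegtodeg}): one repeatedly picks a vertex $j$ that is strongly touched by a constant fraction of the heavy quadratic terms, restricts away most of the $x_{ijl}$ and all of the $x_{jkl}$ so that $j$ disappears from the axioms, and then splits on the few remaining variables indexed by $j$. After $O(\ell\gamma n)$ rounds this yields a refutation of $BOP^\vee_{\ell,n/2}$ in which every term strongly touches fewer than $n/4$ vertices. Only then is the random clustering of Lemma~\ref{cluster} invoked, to cap the lightly touched vertices by another $n/4$. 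Your proposal skips the first phase entirely, and without it the reduction to Theorem~\ref{bopdeg} does not go through.
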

\begin{proof}
	Let $\Pi$ be a PC refutation of $\BOPOR$ over $\pm 1$ of size $2^{\gamma n}$ for a small enough constant $\gamma > 0$. We will reduce $\Pi$ to a refutation $\Pi'$ of $BOP^{\vee}_{\ell/2,n/2}$ of special degree less than $n/2$, contradicting Theorem \ref{bopdeg}. We consider two parts of special degree of any term $t \in \Pi$: the number of strongly touched vertices and the number of lightly touched vertices. We lower both of these separately to arrive at our lower bound. For the former, we consider Quadratic terms of the proof that strongly touch at least $n/8$ vertices. By iteratively picking a vertex $j$ that is strongly touched by many Quadratic terms and applying $\Split$ on its variables, we remove all Quadratic terms with many strongly touched vertices. We then use Lemma \ref{qdegtodeg} to get a refutation where each term of the proof strongly touches less than $n/4$  vertices. For the latter, we randomly cluster our gadget variables as in Lemma \ref{cluster}, and argue that after clustering no term lightly touches more than $n/4$ vertices with high probability. This gives us a refutation of special degree less than $n/2$.
	
	Let $H$ be the set of all Quadratic terms in $\Pi$ which strongly touch at least $n/8$ vertices. Pick a vertex $j$ that by averaging is strongly touched by at least ${1/8}^{th}$ of Quadratic terms in $H$. This means that an eighth of the terms in $H$ contain $x_{ijl}$ for some $i,l$ or contain $y_{ja}$ for some $a$. We pick the gadget indices $l_i$ such that $x_{ijl_i}$ occurs the most in terms containing $x_{ijl}$, $l \in [\ell]$. For every $i$ and $l \neq l_i$, we set $x_{ijl} = 1$. We also set $x_{jkl} = 0$ for all $k,l$. Note that this eliminates all axioms in $BV_j^\vee$ plus any axiom in $T^\vee$ that mentions $j$ and we recover copy of $BOP^{\vee}_{\ell,n -1}$ where each $BV_k^\vee$ contains an additional axiom of the form $(y_k \neq j)$. Functionally this has no effect and the special degree lower bound still holds for this copy. We now apply $\Split$ on all variables $y_{ja}$ and on all $x_{ijl_i}$. Note that since none of these variables are in the axioms, this gives a valid refutation by Lemma \ref{split2}. Moreover, this eliminates from the refutation at least a $1/8\ell$ fraction of terms in $H$ by Lemma \ref{split}. By repeating this for $8\ell \gamma n$ iterations, we remove all Quadratic terms in $H$. By using Lemma \ref{qdegtodeg}, we obtain a refutation of $BOP^{\vee}_{\ell,n/2}$ where each term strongly touches less than $n/4$ vertices (for a small enough $\gamma$).
	
	We now cluster gadget variables $x_{ijl}, l \in [\ell]$ into variables $z_{ijl}, l \in [\ell/2]$ as in Lemma \ref{cluster}. We compute the probability that a term $t$ lightly touches more than $n/4$ vertices after clustering. For any index $i$ and some $j$, if $t$ contained less than $\ell/2$ variables $x_{ijl}$ prior to clustering, then it still contains less than $\ell/2$ variables after clustering, and thus does not lightly touch $i$ through $j$. On the other hand, if $t$ contained at least $\ell/2$ variables $x_{ijl}$, then by Lemma \ref{cluster} it now misses some variable $z_{ijl}$ with probability all but $(3/4)^{\ell/2} < 1/n^2$. Therefore, by a union bound, the probability that $i$ is lightly touched by $t$ through some $j$ is at most $1/n$ and thus the probability that $t$ lightly touches more than $n/4$ vertices is at most ${{n/2}\choose{n/4}} (1/n)^{n/4} < 2^{-O(n\log n)}$. Since the size of the proof is at most $2^{\gamma n}$, again by a union bound we have that there exists a clustering where each term lightly touches at most $n/4$ vertices.
	
	Therefore, we have a refutation $\Pi'$ of $BOP^{\vee}_{\ell/2,n/2}$ where the number of touched vertices by any term is less than $n/2$. This contradicts Lemma \ref{bopdeg}.

\end{proof}	

\section{Proof of the special degree lower bound}

In this section we prove the lower bound for PC stated in Theorem \ref{bopdeg}. For this we largely follow the framework laid out in \cite{galesi2010}, which in turn was motivated by \cite{alekh-raz}.

\subsection{Preliminaries}
We work with a field $\mathbb{F}$ and polynomials in the ring $\mathbb{F}[x_1 ,\dots, x_n]$. 

\begin{definition}[Ideal generated by a set of polynomials $F$]
	Given a set of polynomials $F = \{f_1 \dots f_k\}$, the ideal generated by $F$, denoted by $Span(F)$ is the set of all polynomials of the form $\big\{\sum_i{f_ig_i} ~|~ g_i \in \mathbb{F}[x_1 ,\dots, x_n] \big\}$
\end{definition}

\begin{definition}[Semantic implication]
We say that a set of polynomials $f_1, \dots, f_n$ semantically implies another polynomial $g$, if for every assignment to the underlying variables, $g = 0$ whenever each $f_i = 0$. We denote this by $f_1, \dots, f_n \vDash g$.
\end{definition}

\begin{definition}[Graded lexicographic ordering]
	Let $\prec$ denote the following ordering on multilinear monomials in $\mathbb{F}[x_1, \dots x_n]$. We have $1 \prec x_1 \prec x_2 \prec  \dots \prec x_n$. For any two monomials $M$,$M'$, if $\deg(M) < \deg(M')$, then $M \prec M'$. Also, if $M \preceq M'$ then for any variable $x$, we have $xM \preceq xM'$ (where the product is modulo $x^2 = x$). Note that this is a total ordering on multilinear monomials.
	
	For a polynomial $P$, the leading term, denoted by $LT(P)$ is the greatest monomial in $P$ according to $\prec$. This ordering is extended lexicographically to polynomials, i.e. for two polynomials $P = a_1M_1 + a_2M_2 + \dots + a_kM_k$ and $P' = a'_1M_1' + a'_2M_2 + \dots + a'_kM'_k$, we write $P \prec P'$ if there exists an $i \in [k]$ such that $M_j = M'_j$ for $j < i$ and $M_i \prec M_i'$. If no such $i$ exists then $P$ and $P'$ are incomparable w.r.t. $\prec$ and therefore $\prec$ is not a total order on polynomials. However, in such a case we have that a linear combination of $P$ and $P'$ is less than both $P$ and $P'$ according to $\prec$. Therefore, in any set of polynomials there always exists a minimum element w.r.t. $\prec$.
\end{definition}
	
\begin{definition}[Residue of a polynomial $P$ w.r.t. a set of polynomials $F$]
	Given a set of polynomials $F$, the residue $R_F(P)$ of a polynomial $P$ modulo the ideal generated by $F$ is defined as the minimum element in the set $\{P - Q ~|~ Q \in Span(F)\}$ according to $\prec$ (which exists by the previous definition).
\end{definition}

\begin{definition}[Properties of residue, cf. \cite{galesi2010}, Property 1] \label{rprop}
	For a set of polynomials $F$ and any two polynomials $P$, $Q$, we have:
	
	\begin{enumerate}
		\item $R_F(P) \preceq P$
		\item If $P-Q \in Span(F)$, $R_F(P) = R_F(Q)$
		\item $R_F$ is a linear operator
		\item $R_F(PQ) = R_F(P \cdot R_F(Q))$
	\end{enumerate}
	 
\end{definition}

\subsection{Proof of the special degree lower bound}

Our lower bound proof relies on building a linear operator $R$ with the following properties, which is sufficient for proving the desired lower bound from Theorem \ref{bopdeg}.

\begin{lemma}[\cite{galesi2010}, \cite{alekh-raz}] \label{rop}
	Suppose that there exists a linear operator $R$ such that 
	\begin{enumerate}
		\item $R(A) = 0$ for every axiom $A$ of $BOP_n$
		\item For any polynomial $P$ and any variable $w$ such that $|\tau(wt)| < n$ for any $t \in P$, $R(wP) = R(w\cdot R(P))$ 
		\item $R(1) = 1$
	\end{enumerate}
	
	Then $BOP_n$ requires proofs of where there exists a term $t$ with $|\tau(t)| = n$ .
\end{lemma}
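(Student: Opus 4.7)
The plan is a proof by contradiction. Suppose, toward a contradiction, that there exists a PC refutation $\Pi = (P_1, \dots, P_s)$ in which every term $t$ appearing in any line satisfies $|\tau(t)| < n$. I will apply $R$ to every line and show by induction on the line number that $R(P_\ell) = 0$ for all $\ell \in [s]$. Since $P_s = 1$, this gives $R(1) = 0$, contradicting property 3 which requires $R(1) = 1$.

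The induction has three cases corresponding to the PC derivation rules. For the base case, if $P_\ell$ is an axiom of $BOP_n$ (or $\BOPOR$), then $R(P_\ell) = 0$ directly by property 1. For the linear combination rule, if $P_\ell = \alpha P_i + \beta P_k$ with $i,k < \ell$, then by the linearity of $R$ and the induction hypothesis, $R(P_\ell) = \alpha R(P_i) + \beta R(P_k) = \alpha \cdot 0 + \beta \cdot 0 = 0$.

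The key case is the multiplication rule $P_\ell = w P_i$ for some variable $w$. Here I want to invoke property 2 in the form $R(w P_i) = R(w \cdot R(P_i))$. The hypothesis required to apply property 2 is that $|\tau(wt)| < n$ for every term $t$ in $P_i$. But the terms $wt$ (with $t$ ranging over the terms of $P_i$) are precisely the terms of $w P_i = P_\ell$, which is itself a line of $\Pi$. By the standing assumption on $\Pi$, every such term has special degree strictly less than $n$, so property 2 applies. Combined with the induction hypothesis $R(P_i) = 0$ and linearity, we conclude $R(P_\ell) = R(w \cdot R(P_i)) = R(w \cdot 0) = 0$, closing the induction.

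The argument itself is essentially a soft reduction and not technically deep; the main obstacle is just carefully matching the side condition of property 2 against the hypothesis on $\Pi$, which is exactly why the special-degree bound $|\tau(t)| = n$ is the quantity being forced to appear. The genuinely difficult work lies outside this lemma: namely, constructing a specific linear operator $R$ for the axioms of $\BOPOR$ satisfying all three properties. That construction, which I anticipate will follow the expander-based residue operator from \cite{galesi2010} adapted to the binary-pointer axioms $BV_j^\vee$ and the $OR$-lifted ordering axioms $T^\vee$, is where the structure of the special-degree notion (combining strong and light touches) will need to be matched to the properties of $R_F$ from Definition \ref{rprop}.
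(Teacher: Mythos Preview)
Your argument is correct and is exactly the standard induction-on-lines argument that the cited references \cite{galesi2010}, \cite{alekh-raz} use; the paper itself does not spell out a proof of this lemma but simply invokes it, so there is nothing further to compare. Your handling of the multiplication case is the right one: the side condition of property~2 asks for $|\tau(wt)|<n$ for every $t\in P_i$, and since each such $wt$ is a term of the line $P_\ell=wP_i\in\Pi$, the standing hypothesis on $\Pi$ supplies exactly this.
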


We conclude by showing the existence of such a linear operator $R$. Again, we follow the framework of \cite{galesi2010}, \cite{alekh-raz} and choose $R(t)$ for a term $t$ to be the residue $R_{\tau(t),T^\vee}(t)$, where we abuse notation to indicate by $\tau(t)$ the set of all axioms $BV_j^\vee$ for $j \in \tau(t)$. We drop the $T^\vee$ and write $R_{\tau(t)}(t)$ to mean $R_{\tau(t),T^\vee}(t)$.



We would like to show that for any term $t$ with $|\tau(t)|<n-1$ and a variable $w$, $\tau(wt)$ cannot do any better in reducing $t$ than $\tau(t)$. This is made formal below.

\begin{lemma}\label{rdrop}
	Let $t$ be term and let $w$ be any variable such that $|\tau(wt)|<n$. Then we have $R_{\tau(wt)}(t) = R_{\tau(t)}(t)$.
\end{lemma}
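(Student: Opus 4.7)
The plan is to prove the equality by establishing two inequalities. The direction $R_{\tau(wt)}(t)\preceq R_{\tau(t)}(t)$ is immediate, because appending $w$ to $t$ cannot remove a vertex from either the strongly or the lightly touched set, so $\tau(t)\subseteq\tau(wt)$, giving $Span(\tau(t)\cup T^\vee)\subseteq Span(\tau(wt)\cup T^\vee)$, and the minimum element over the larger coset $t+Span(\tau(wt)\cup T^\vee)$ is at most the minimum over the smaller one. The substantive content is the reverse inequality: that the extra axioms $BV_j^\vee$ for $j\in J:=\tau(wt)\setminus\tau(t)$ yield no further reduction of $t$.

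The first step is to classify $J$. A single variable $w$ is either of incoming type for one vertex $j$ (of the form $y_{ja}$ or $x_{ijl}$, which can strongly touch $j$) or of outgoing type (of the form $x_{jkl}$, which can complete a block $\{x_{jkl'}:l'\in[\ell]\}$ and thereby lightly touch $j$ through $k$), and cannot do both at once. Consequently each $j\in J$ falls into one of two cases: Case (a), where $j$ is strongly touched by $t$ but not lightly, and $w=x_{jkl}$ completes the missing slot in the outgoing block $\{x_{jkl'}:l'\in[\ell]\}$; or Case (b), where $j$ is lightly touched by $t$ through some $k'$ but not strongly, and $w$ is an incoming variable of $j$.

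The central step is then to argue that for each $j\in J$, the axioms $BV_j^\vee$ cannot reduce $t$ any further than $\tau(t)\cup T^\vee$ already does. Each axiom in $BV_j^\vee$ has graded-lex leading monomial $\pm\prod_a y_{ja}\cdot\prod_l x_{ijl}$ for some $i$, and any polynomial in its multiplicative span inherits these variables in its leading monomial. In Case (b) the variables $\{y_{ja}\}\cup\{x_{ijl}\}$ are disjoint from those of $t$, so a direct leading-monomial analysis --- tracking how combinations of $BV_j^\vee$ axioms with $\tau(t)$-axioms, $T^\vee$-axioms, and polynomial multipliers evolve under graded-lex --- shows that no such combination can match the leading monomial of any term of $R_{\tau(t)}(t)$. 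In Case (a) the axioms share some variables with $t$, and the argument instead exploits the near-complete outgoing block $\{x_{jkl'}:l'\neq l\}$ already present in $t$: whenever a $BV_j^\vee$ axiom could in principle reduce a monomial of $t$ (which forces $t$ to contain a full $y_j$-block and a full $x_{ij\cdot}$-block), a corresponding asymmetry or transitivity axiom in $T^\vee$ already delivers the same or a smaller reduction from within $Span(\tau(t)\cup T^\vee)$, so no gain is obtained.

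The main obstacle I expect is Case (a), since there the $BV_j^\vee$ axioms do share variables with $t$ and cannot be dismissed by a disjointness argument. Making the pairing between potential $BV_j^\vee$-reductions and $T^\vee$-reductions precise, and verifying that consistency with the graded-lex ordering is maintained across all polynomial combinations, will be the most delicate part of the proof.
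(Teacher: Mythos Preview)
Your easy direction is fine, but the hard direction rests on a misreading of $\tau(t)$. Despite the paper's wording ``both strongly and lightly touched'', $\tau(t)$ must be the \emph{union} of the strongly and the lightly touched vertices: this is what the paper's own proof of this lemma uses (the restriction $\rho$ sets $y_k$ for $k\in\tau(wt)\setminus\tau(t)$ and needs $y_{ka}\notin t$, which requires $k$ not strongly touched by $t$), and it is what makes property~1 of Lemma~\ref{rop} hold (every term of an axiom in $BV_j^\vee$ contains some $y_{ja}$, hence strongly touches $j$, but need not lightly touch $j$). Under the union reading, any $j\in J=\tau(wt)\setminus\tau(t)$ is \emph{neither} strongly \emph{nor} lightly touched by $t$, so your Cases~(a) and~(b) are both vacuous and the entire case split collapses.

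The paper's argument is completely different and semantic rather than syntactic. Since $|\tau(wt)|<n$, pick a vertex $j\notin\tau(wt)$ and build a partial assignment $\rho$ that makes $j$ the minimum element (set all $x_{ijl}=0$; for each $k$ set $x_{jkl_k}=1$ for some $l_k$ with $x_{jkl_k}\notin t$) and additionally sets $y_k=j$ for every $k\in\tau(wt)\setminus\tau(t)$. This $\rho$ touches no variable of $t$, kills every axiom in $\tau(wt)\setminus\tau(t)$, and sends $T^\vee$ to a subset of itself. Applying $\rho$ to the relation $t-R_{\tau(wt)}(t)\in Span(\tau(wt)\cup T^\vee)$ therefore gives $t-R_{\tau(wt)}(t)|_\rho\in Span(\tau(t)\cup T^\vee)$, whence $R_{\tau(t)}(t)\preceq R_{\tau(wt)}(t)|_\rho\preceq R_{\tau(wt)}(t)$, closing the loop.

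Even if you repaired the case split, your proposed leading-monomial analysis is not yet a proof: the claim that ``any polynomial in its multiplicative span inherits these variables in its leading monomial'' does not survive linear combinations with other axioms, and the ``pairing'' in Case~(a) between $BV_j^\vee$-reductions and $T^\vee$-reductions is asserted rather than exhibited. The restriction argument above sidesteps all of this in a few lines.
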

\begin{proof}
	We have by definition of $R_{\tau(wt)}(t)$ that $t - R_{\tau(wt)}(t)$ belongs to $Span(\tau(wt))$, or in other words we have
	
	$$ T^\vee, \tau(wt) \vDash t - R_{\tau(wt)}(t) $$
	
	Since we have $|\tau(wt)| \leq n-1$ there exists a vertex axiom  $BV_j^\vee \notin \tau(wt)$. This means that for every $i,l$ $x_{ijl} \not \in t$, and for every $k$ there exists some $l_k$ such that $x_{jkl_k} \not \in t$. Consider the following partial assignment $\rho$ to the variables of $\BOPOR$ (called a $j$-cta in \cite{galesi2010}): set $x_{ijl} = 0$ for all $i,l$ and set $x_{jkl_k} = 1$ for every $k$. Also set $y_{k} = j$ for $BV_k^\vee \in \tau(wt)\setminus \tau(t)$ and leave the other variables unset. In other words, $\rho$ sets the element $j$ as the minimum, sets the pointers $y_k$ of the vertices $k$ touched by $w$ and not touched by $t$ (if they exist) to point to the vertex $j$ and leaves everything else unset. We note here that $\rho$ is chosen such that none of the variables in $t$ are set, and each axiom in $\tau(wt)\setminus \tau(t)$ is killed.
	
	Let us see the effect $\rho$ has on the above semantic implication. Firstly, we claim that $T_{|_\rho}^\vee \subseteq T^\vee$. This is easy to see: $\rho$ kills all the ordering axioms involving $j$ and does not touch the other ordering axioms. Similarly, we claim that the set of axioms in $\tau(wt)_{|_\rho}$ is a subset of those in $\tau(t)$. This is because as mentioned above, every axiom in $\tau(wt)\setminus \tau(t)$ is killed by $\rho$. Moreover, for any $BV_k \in \tau(t)$, the axiom pointing to $j$ is killed and others are left untouched. Therefore we have the semantic implication
	
	$$ T^\vee, \tau(t) \vDash t - {R_{\tau(wt)}(t)}_{|_\rho} $$
	
	This is because $t$ does not contain any variable that is set by $\rho$. We now have by minimality of $R_{\tau(t)}(t)$ that $R_{\tau(t)}(t) \preceq {R_{\tau(wt)}(t)}_{|_\rho}$. However, since a restriction can only decrease the degree of any term, we have that ${R_{\tau(wt)}(t)}_{|_\rho} \preceq {R_{\tau(wt)}(t)}$. Finally since $\tau(wt) \supseteq \tau(t)$, we have ${R_{\tau(wt)}(t)} \preceq {R_{\tau(t)}(t)}$. We therefore have ${R_{\tau(t)}(t)} \preceq {R_{\tau(wt)}(t)} \preceq {R_{\tau(t)}(t)}$ and thus ${R_{\tau(t)}(t)} = {R_{\tau(wt)}(t)}$.
	
\end{proof}

We note that the above lemma also works when we replace $\tau(wt)$ by any set of the form $BV_I^\vee$, where $BV_I^\vee = \{BV_i^\vee ~|~ i \in I\}$, such that $|I| < n$ and $I$ contains $\tau(t)$.

\begin{lemma} \label{rdrop2}
	Let $I$ be any set of vertices of size less than $n$ and let $t$ be a term such that $\tau(t) \subseteq BV_I$. Then we have $R_{BV_I,T^\vee}(t) = \rt$.     
\end{lemma}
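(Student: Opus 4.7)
The plan is to extend the argument of Lemma \ref{rdrop} from the single-variable setting to an arbitrary set $BV_I^\vee$ of vertex axioms containing $BV_{\tau(t)}^\vee$. One direction is immediate: the inclusion $BV_{\tau(t)}^\vee\subseteq BV_I^\vee$ enlarges the ideal, so the residue can only shrink in the $\prec$-order, giving $R_{BV_I,T^\vee}(t)\preceq R_{\tau(t)}(t)$. The substance of the proof is the reverse inequality, which I would establish by exhibiting a partial assignment that turns the semantic implication defining $R_{BV_I,T^\vee}(t)$ into one over $T^\vee\cup BV_{\tau(t)}^\vee$ alone, while leaving $t$ fixed.

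Since $|I|<n$, there is a vertex $j\notin I$, and in particular $j\notin\tau(t)$, so $j$ is neither strongly nor lightly touched by $t$. I would use a $j$-cta $\rho$ along the lines of Lemma \ref{rdrop}: set $x_{ijl}=0$ for every $i,l$; for each $k$, pick a gadget index $l_k$ with $x_{jkl_k}\notin t$ (which exists because $j$ is not lightly touched through $k$) and set $x_{jkl_k}=1$; finally set $y_{k}=j$ for every $k\in I\setminus\tau(t)$, which is safe because such $k$ are not strongly touched and hence their $y$-variables do not occur in $t$. Under $\rho$, every ordering axiom of $T^\vee$ mentioning $j$ is trivially satisfied while the rest survive intact, so $T^\vee_{|_\rho}\subseteq T^\vee$; for $k\in\tau(t)$ the only axiom of $BV_k^\vee$ affected is the one with pointer value $j$, which becomes satisfied because $x_{jkl_k}=1$, so $(BV_k^\vee)_{|_\rho}\subseteq BV_k^\vee$; and for $k\in I\setminus\tau(t)$ the joint fixing $y_k=j$ and $x_{jkl_k}=1$ kills the entire family $BV_k^\vee$.

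Hence $(BV_I^\vee)_{|_\rho}\cup T^\vee_{|_\rho}\subseteq BV_{\tau(t)}^\vee\cup T^\vee$. Applying $\rho$ to the defining implication $T^\vee,BV_I^\vee\vDash t-R_{BV_I,T^\vee}(t)$ and using $t_{|_\rho}=t$ gives
\[T^\vee,\; BV_{\tau(t)}^\vee \;\vDash\; t - \bigl(R_{BV_I,T^\vee}(t)\bigr)_{|_\rho}.\]
By the minimality of $R_{\tau(t)}(t)$, $R_{\tau(t)}(t)\preceq\bigl(R_{BV_I,T^\vee}(t)\bigr)_{|_\rho}$; and since restriction cannot raise a polynomial in the $\prec$-order, $\bigl(R_{BV_I,T^\vee}(t)\bigr)_{|_\rho}\preceq R_{BV_I,T^\vee}(t)$. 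Chained with the easy direction, this sandwich delivers the desired equality $R_{BV_I,T^\vee}(t)=R_{\tau(t)}(t)$.

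The only point that requires care is verifying that $\rho$ simultaneously leaves every variable of $t$ untouched, satisfies every axiom $BV_k^\vee$ for $k\in I\setminus\tau(t)$, and only deletes (never mutates) the remaining axioms in $T^\vee$ and in $BV_k^\vee$ for $k\in\tau(t)$. All three conditions are forced by $j\notin\tau(t)$ combined with the pointer settings $y_k=j$ for $k\in I\setminus\tau(t)$; once this bookkeeping is checked, the argument is a direct repetition of the residue-minimality manipulation in Lemma \ref{rdrop}.
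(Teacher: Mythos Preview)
Your proposal is correct and is exactly the approach the paper has in mind: the paper does not give a separate proof of this lemma but simply remarks that the argument of Lemma~\ref{rdrop} carries over verbatim when $\tau(wt)$ is replaced by an arbitrary $BV_I^\vee$ with $|I|<n$ and $\tau(t)\subseteq I$. Your write-up faithfully spells out that replacement, including the choice $j\notin I$ (whence $j\notin\tau(t)$) and the pointer settings $y_k=j$ for $k\in I\setminus\tau(t)$, and the sandwich-inequality conclusion is identical.
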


The following technical lemma is needed to make the proof of Lemma \ref{rop} go through.

\begin{lemma}\label{rtech}
	 Let $t$ a term and let $t'$ be any term appearing in $R_{\tau(t)}(t)$. Then we have $\tau(t') \subseteq \tau(t)$
\end{lemma}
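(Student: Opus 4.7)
The plan is to argue by contradiction, exploiting the minimality of $R = R_{\tau(t)}(t)$ in the $\prec$-order. Write $R = t - Q$ for some $Q \in Span(\tau(t) \cup T^\vee)$ and suppose, for contradiction, that some term $t'$ of $R$ has a vertex $j \in \tau(t') \setminus \tau(t)$. My goal is to exhibit a restriction $\rho$ of the variables satisfying three properties: (i) $t|_\rho = t$, so that $R|_\rho = t - Q|_\rho$; (ii) every axiom $A \in \tau(t) \cup T^\vee$ has $A|_\rho \in Span(\tau(t) \cup T^\vee)$, so that $Q|_\rho \in Span(\tau(t) \cup T^\vee)$; and (iii) $t'|_\rho = 0$, which forces $R|_\rho \prec R$ by the graded-lex extension (removing any term from $R$ yields a strictly $\prec$-smaller polynomial). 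Properties (i)--(iii) together contradict the minimality of $R$.

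To construct $\rho$, split on why $j \notin \tau(t)$. If $j$ is not strongly touched by $t$, no variable of the form $x_{i'jl}$ (second index $j$) or $y_{ja}$ appears in $t$; let $\rho$ set all of these to $0$. Since $j$ is strongly touched by $t'$, some such variable lies in $t'$, giving (iii), and (i) is by construction. If instead $j$ is strongly touched by $t$ but not lightly touched, let $j^*$ witness the light touching of $j$ by $t'$ (so all $x_{jj^*l} \in t'$); since $j$ is not lightly touched by $t$, pick $l^*$ with $x_{jj^*l^*} \in t' \setminus t$ and let $\rho$ set $x_{jj^*l^*}$ to $0$. Again (i) and (iii) are immediate.

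The heart of the argument, and the main obstacle, is verifying (ii). Expand each axiom explicitly as a product polynomial (positive literals as $x$'s, negated literals as $(1-x)$'s) and trace the effect of $\rho$: in most subcases zeroing a positive literal turns the polynomial into $0$, and any axiom not touching the restricted variable is unchanged. The delicate subcase arises in the second branch when $j^* \in \tau(t)$: the axiom $BV_{j^*}^\vee$ contains the clause $(y_{j^*} \ne j) \vee \bigvee_l x_{jj^*l}$, and setting $x_{jj^*l^*} = 0$ produces a strictly stronger polynomial $A|_\rho$ with $A = (1 - x_{jj^*l^*})\,A|_\rho$. To place $A|_\rho$ inside the ideal, I would write $A|_\rho = A + x_{jj^*l^*}\,A|_\rho$ and argue that $x_{jj^*l^*}\,A|_\rho$ lies in $Span(\tau(t) \cup T^\vee)$, using the remaining $x_{jj^*l}$ variables appearing in the other clauses of $BV_{j^*}^\vee$ (those for indices $i \ne j$) together with transitivity and asymmetry from $T^\vee$. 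Once (ii) is secured, the minimality of $R$ is contradicted and the lemma follows.
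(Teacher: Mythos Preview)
Your overall strategy---pick a restriction $\rho$ that leaves $t$ and the generating set $\tau(t)\cup T^\vee$ intact while strictly shrinking $t'$, and then contradict minimality of the residue---is exactly the paper's. The gap is in how you build $\rho$, and it stems from a misreading of $\tau(\cdot)$.

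Despite the unfortunate phrase ``both strongly and lightly touched'' in the definition, $\tau(t)$ is the \emph{union} of strongly and lightly touched vertices (this is how it is used in the proof of Lemma~\ref{rdrop} and in the main theorem). Hence $j\notin\tau(t)$ means $j$ is neither strongly nor lightly touched by $t$, so your Case~B never occurs, and your case split is unnecessary. More importantly, $j\in\tau(t')$ need only mean that $j$ is \emph{lightly} touched by $t'$, i.e.\ for some $k$ every $x_{jkl}\in t'$ (first index $j$). Your Case~A restriction sets only the second-index-$j$ variables $x_{i'jl}$ and the $y_{ja}$'s to $0$; it never touches any $x_{jkl}$, so in this situation $t'|_\rho=t'$ and (iii) fails. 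The paper's fix is precisely to also set $x_{jkl_k}=1$ for each $k$, where $l_k$ is chosen with $x_{jkl_k}\notin t$ (such $l_k$ exists because $j$ is not lightly touched by $t$). With this addition one checks that all ordering axioms mentioning $j$ are killed outright, the remaining axioms in $\tau(t)\cup T^\vee$ are untouched, $t|_\rho=t$, and $t'$ loses at least one variable, giving $\rt|_\rho\prec\rt$.

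As an aside, even under your intersection reading the ``delicate subcase'' of Case~B does not go through: the restricted clause $A|_\rho=(y_{j^*}\ne j)\vee\bigvee_{l\ne l^*}x_{jj^*l}$ is \emph{not} in $Span(\tau(t)\cup T^\vee)$. Indeed, one can satisfy $\tau(t)\cup T^\vee$ while setting $y_{j^*}=j$, $x_{jj^*l^*}=1$, and $x_{jj^*l}=0$ for $l\ne l^*$, which falsifies $A|_\rho$. So the identity $A|_\rho=A+x_{jj^*l^*}A|_\rho$ cannot be bootstrapped as you suggest. The paper sidesteps all of this by using the right $\rho$ and arguing semantically (\cref{rdrop}), which makes property~(ii) immediate.
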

\begin{proof}
	 The proof of this lemma is similar to the previous one. Suppose that there exists some term $t' \in R_{\tau(t)}(t)$ with $j \in \tau(t') \setminus \tau(t)$. Then we have 
	 
	 $$T^\vee, \tau(t) \vDash t - \rt $$
	 
	 Let $\rho$ be the partial assignment defined in the previous lemma where we additionally set $y_j$ to an arbitrary value. Substituting this in the above semantic implication, we have similar to the previous lemma
	 
	 $$T^\vee, \tau(t) \vDash t - \rt_{|_\rho} $$
	 
	 We now have $\rt_{|_\rho} \prec \rt$ since $t' \in \rt$ is hit by $\rho$ and its degree is now lesser. However from the above semantic implication we have $\rt \preceq \rt_{|_\rho}$, which is a contradiction.
\end{proof}

We conclude this section by showing the existence of a linear operator $R$ from Lemma \ref{rop}. As mentioned earlier, we define $R(t)$ to be $\rt$.

\begin{lemma}
	$R(t) = \rt$ satisfies the properties of Lemma \ref{rop} and therefore $\BOPOR$ requires PC proofs where $|\tau(t)| = n$ for some term $t$
\end{lemma}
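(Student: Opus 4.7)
The plan is to verify the three conditions of Lemma \ref{rop} for the operator $R$ defined on monomials by $R(t) := R_{\tau(t)}(t)$ and extended linearly to polynomials. Properties 1 and 3 follow cleanly from the basic behaviour of residues; the multiplicative identity in Property 2 is the main technical point, and is exactly where Lemmas \ref{rdrop2} and \ref{rtech} earn their keep.

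Property 3 is immediate: $\tau(1) = \emptyset$, so $R(1) = R_{\emptyset,\, T^\vee}(1)$. Since $T^\vee$ is satisfiable (any total order on $[n]$ is a model), $1 \notin \mathrm{Span}(T^\vee)$; and because the constant monomial is the minimum of the graded-lex order, no polynomial $1-Q$ with $Q \in \mathrm{Span}(T^\vee)$ can have leading term strictly below $1$. The residue is therefore $1$ itself.

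For Property 1, I would argue uniformly for each axiom $A$ of $\BOPOR$ (vertex, transitivity, or asymmetry). Let $V_A$ be the set of vertices mentioned by $A$; then $|V_A| \leq 3$, and every monomial $t$ appearing in the polynomial expansion of $A$ satisfies $\tau(t) \subseteq V_A$. Taking $I := V_A$ we have $|I| < n$ and $A \in \mathrm{Span}(BV_I \cup T^\vee)$. Lemma \ref{rdrop2} uniformly rewrites $R(t) = R_{BV_I,\, T^\vee}(t)$ for every monomial of $A$, so by linearity of the standard residue operator $R(A) = R_{BV_I,\, T^\vee}(A) = 0$.

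Property 2 is the main obstacle. By linearity of $R$ the claim reduces to the per-monomial identity $R(wt) = R(w \cdot R(t))$ whenever $|\tau(wt)| < n$, but the difficulty is that $R$ is defined monomial-by-monomial, with a potentially different $\tau$-ideal for each monomial, so the textbook identity $R_F(PQ) = R_F(P \cdot R_F(Q))$ from Definition \ref{rprop}(4) does not apply off-the-shelf. The strategy is to work inside one common ambient ideal. Fix a monomial $t$ of $P$ and set $I := \tau(wt)$; then $\tau(t) \subseteq I$ and $|I| < n$. Lemma \ref{rdrop2} collapses $R(wt) = R_{BV_I,\, T^\vee}(wt)$ and $R(t) = R_{BV_I,\, T^\vee}(t)$, and Definition \ref{rprop}(4) gives
\[
R(wt) \;=\; R_{BV_I,\, T^\vee}\bigl(w \cdot R(t)\bigr).
\]
To replace the right-hand side by $R(w \cdot R(t))$, I would invoke Lemma \ref{rtech}: every monomial $t'$ of $R(t)$ satisfies $\tau(t') \subseteq \tau(t)$, and a short case analysis (any vertex lightly touched by $wt'$ through some $j$ is either already strongly touched by $t'$ via the corresponding $x_{ijl}$, or else all the $x_{ijl}$ come from $w$) gives $\tau(wt') \subseteq \tau(wt) = I$. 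A second termwise application of Lemma \ref{rdrop2} then yields $R(w \cdot R(t)) = R_{BV_I,\, T^\vee}(w \cdot R(t))$, and combining the two equalities delivers $R(wt) = R(w \cdot R(t))$. Summing over monomials of $P$ with their coefficients closes the argument, which, together with Properties 1 and 3, proves the lemma via Lemma \ref{rop}.
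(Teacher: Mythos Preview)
Your argument tracks the paper's almost line for line. Property~3 is handled identically; for Property~1 you route everything through a common ideal $I=V_A$ and Lemma~\ref{rdrop2}, whereas the paper simply splits into ordering axioms (already in $T^\vee$) and vertex axioms (each term $t$ of an axiom in $BV_j^\vee$ has $j\in\tau(t)$). Both are fine. For Property~2 you follow exactly the paper's chain: Definition~\ref{rprop}(4), then Lemma~\ref{rdrop}/\ref{rdrop2} to replace $R_{\tau(wt)}(t)$ by $R(t)$, then Lemma~\ref{rtech}, then Lemma~\ref{rdrop2} termwise on $wR(t)$.

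The one place you go beyond the paper is the parenthetical case analysis meant to justify $\tau(wt')\subseteq\tau(wt)$ for $t'\in R(t)$, and that analysis is not correct as written. You assert that if a vertex $i$ is lightly touched by $wt'$ through $j$, then ``$i$ is already strongly touched by $t'$ via the corresponding $x_{ijl}$''. But by the paper's definition the variable $x_{ijl}$ strongly touches its \emph{second} index $j$, not $i$; so having some $x_{ijl}\in t'$ only yields $j\in\tau(t')\subseteq\tau(t)$, which does not place $i$ in $\tau(wt)$. Your alternative branch (``all the $x_{ijl}$ come from $w$'') is vacuous once $\ell\ge 2$, since $w$ is a single variable. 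In other words, Lemma~\ref{rtech} gives $\tau(t')\subseteq\tau(t)$, but the light-touch condition is conjunctive in $l$, so this inclusion does not formally propagate to $\tau(wt')\subseteq\tau(wt)$ by the two cases you list. The paper is equally terse here (it just writes ``By Lemma~\ref{rtech}, we have $\tau(t)\subseteq\tau(wt_i)$'' for $t\in wR(t_i)$ and moves on), so you are not omitting an argument the paper actually supplies; but the justification you added does not stand on its own.
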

\begin{proof}

We verify that our choice of $R$ satisfies each property listed in Lemma \ref{rop}.

\paragraph{1.} Let $A$ be an axiom of $\BOPOR$. If $A \in T^\vee$ is one of the ordering axioms, $R(A) = 0$ since we reduce by $T^\vee$ by default. On the other hand, if $A$ is an axiom in $BV_j^\vee$, then for any term $t \in A$, $\tau(t)$ contains $BV_j^\vee$ and thus $R(A)$ is again equal to zero.

\paragraph{2.} Let $P = \sum_i a_i t_i$ be a polynomial and let $w$ be any variable with $|\tau(wt_i)| < n$. Then we have  $R(wP) = \sum_i a_iR(wt_i) =\sum_i a_i R_{\tau(wt_i)}(wt_i)$. Now, by Definition \ref{rprop}, pt. 4 we have $R_{\tau(wt_i)}(wt_i) = R_{\tau(wt_i)}(wR_{\tau(wt_i)}(t_i))$. Since $|\tau(wt_i)| < n$, by Lemma \ref{rdrop} we have $R_{\tau(wt_i)}(t_i) = R_{\tau(t_i)}(t_i) =  R(t_i)$. Therefore we have $R(wP) = R_{\tau(wt_i)}(wR(t_i))$. We argue that the latter expression is in fact equal to $R(wR(t_i))$. To see this, consider any term $t \in wR(t_i)$. By Lemma \ref{rtech}, we have $\tau(t) \subseteq \tau(wt_i)$. Moreover, by Lemma \ref{rdrop2}, we have $R(t) = R_{\tau(t)}(t) = R_{\tau(wt_i)}(t)$ (since $|\tau(wt_i)| < n$). Putting everything together, we get that $R(wP) = R_{\tau(wt_i)}(wR(t_i)) = R(wR(t_i))$.

\paragraph{3.} We have that $\tau(1) = \emptyset$ and since $T^\vee$ is satisfiable, 1 is irreducible modulo $T^\vee$ and thus $R_{\tau(1)}(1) = 1$.

\end{proof}

\section*{Acknowledgments}

The author would like to thank Dmitry Sokolov and Alexander Razborov for posing this open problem, Russell Impagliazzo for pointing to the random clustering technique, Arkadev Chattopadhyay and Yogesh Dahiya for helpful discussions.

\appendix
\section{Proof of Sokolov's low Quadratic degree to low degree lemma}
	We include here a proof from \cite{dmm23} of Lemma \ref{qdegtodeg} . We restate it below.
	
	\begin{lemma}
		Let $\Pi$ be a refutation of a set of axioms $F$ of degree $d_0$ with Quadratic degree at most $d$. Then there exists a refutation $\Pi'$ of $F$ with (usual) degree at most $2\max(d,d_0)$.
	\end{lemma}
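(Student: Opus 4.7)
The plan is to construct $\Pi'$ from $\Pi$ by multiplying each line by a chosen pivot term. For each line $P_i$ of $\Pi$, fix an arbitrary pivot term $t_i^* \in P_i$ and set $Q_i := t_i^* P_i$ (reduced modulo $x^2 = 1$). Every term of $Q_i$ has the form $t_i^* \cdot t$ with $t \in P_i$, hence has degree at most $d$ by the Quadratic degree hypothesis. I would then build $\Pi'$ line by line, deriving $Q_i$ in place of $P_i$ and keeping every intermediate of degree at most $2\max(d,d_0)$. Since $P_s = 1$ forces $t_s^* = 1$ and hence $Q_s = 1$, this produces a valid refutation of the desired degree.

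The derivation of each $Q_i$ proceeds by cases on the rule that produced $P_i$ in $\Pi$:
\begin{itemize}
\item \emph{Axiom:} If $P_i = A$ with $\deg A \leq d_0$, include $A$ in $\Pi'$ and multiply by the variables of $t_i^*$ one at a time. Each intermediate has degree at most $d_0 + |t_i^*| \leq 2d_0$, since $t_i^* \in A$ forces $|t_i^*| \leq d_0$.
\item \emph{Variable multiplication:} If $P_i = x P_k$, set $t_i^* = x t_k^*$. Then $Q_i = (x t_k^*)(x P_k) = Q_k$ modulo $x^2 = 1$, so $Q_i$ is already available.
\item \emph{Linear combination:} If $P_i = \alpha P_j + \beta P_k$, pick $t_i^* \in P_i$; since $P_i \subseteq P_j \cup P_k$ as a set of monomials, WLOG $t_i^* \in P_j$. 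Let $u = t_i^* t_j^*$ and $v = t_i^* t_k^*$. Derive $u Q_j = t_i^* P_j$ by multiplying $Q_j$ by the variables of $u$ sequentially, and likewise $v Q_k = t_i^* P_k$; then $Q_i = \alpha u Q_j + \beta v Q_k$.
\end{itemize}

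The delicate step is the degree bound in the linear combination case, which requires two observations. First, both $|u|$ and $|v|$ are at most $d$: $|u| \leq d$ because $t_i^*, t_j^* \in P_j$; for $|v|$, either $t_k^*$ survives in $P_i$ (giving $t_i^*, t_k^* \in P_i$) or it cancels there, which forces $t_k^* \in P_j$ as well (to kill its nonzero contribution from $P_k$), so $t_i^*, t_k^* \in P_j$. Second, $t_i^* P_j$ and $t_i^* P_k$ themselves have degree $\leq d$: for any term $t \in P_j$, either $t$ survives in $P_i$ (so $t, t_i^* \in P_i$) or it cancels and hence lies in $P_k$ too while $t, t_i^* \in P_j$ directly; the symmetric argument handles $P_k$. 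Consequently, the sequential multiplications have intermediates of degree at most $|u|+d$ and $|v|+d$, both $\leq 2d$, and the final $Q_i$ has degree at most $d$. Everything else is routine verification.
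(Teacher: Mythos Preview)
Your proof is correct and takes essentially the same approach as the paper: multiply each line by a pivot term, handle axioms in degree $2d_0$, propagate pivots through variable multiplication, and in the linear combination case bound $|t_i^* t_j^*|$ and $|t_i^* t_k^*|$ by $d$ via the same case split on whether $t_k^*$ cancels in $P_i$. Your second observation (that $t_i^* P_j$ and $t_i^* P_k$ themselves have degree at most $d$) is a correct extra detail the paper does not spell out, but it is not needed for the $2\max(d,d_0)$ bound on intermediates.
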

	
	\begin{proof}
		Let $\Pi = \{P_j\}_j$. Now, consider $\Pi'=\{P_j'\}_j$ where $P_j' = t_jP_j$, with each $t_j \in P_j$ carefully selected. Since the degree of $t_jP_j$ is bounded by the Quadratic degree of $P_j$, every line in $\Pi'$ is of degree at most $d$. However, $\Pi'$ is not an immediate valid refutation of $F$, but it can be transformed into one. We will show that each line of $\Pi'$ can be derived from previous lines and axioms of $F$ in degree at most 2$\max(d,d_0)$, completing the proof. We proceed by induction on line number $j$.
		
		If $P_j$ is an axiom, then we set $t_j$ to be an arbitrary term in $P_j$ and derive $P_j' = t_jP_j$ in degree $2d_0$ starting from $P_j$. \\
		(Note that in \cite{sok20}, it is  claimed that this step can be derived in degree $d_0$. But this is not always so.
		For instance,  if $p=x_1x_2x_3+x_2x_3x_4+x_3x_4x_1+x_4x_1x_2$ has degree $d_0=3$, and $d=2$, then for any term $t\in p$, $tp$ has degree $2$ but needs degree $4>\max\{d,d_0\}$ for the derivation.
		)
		
		If $P_i = xP_j$ for some $j<i$, then we select $t_i = xt_j$, and consequently, $P_i' = t_iP_i = t_jP_j = P_j'$ is derived without raising the degree.
		
		Finally, if $P_i = P_{j_1} + P_{j_2}$, we choose $t_i$ to be an arbitrary term in $P_i$ and derive $P_i' = t_iP_i = t_it_{j_1}P_{j_1}'+t_it_{j_2}P_{j_2}'$. We argue that the degree of both $t_it_{j_1}$ and $t_it_{j_2}$ is at most $d$, and as a result, $P_i'$ can be derived from $P_{j_1}'$ and $P_{j_2}'$ in degree at most $2d$, which completes the proof. To justify this assertion, let $t_i \in P_{j_1}$ without loss of generality (every term in $P_i$ appears in either $P_{j_1}$ or $P_{j_2}$). Then degree of $t_it_{j_1}$ is bounded by the Quadratic degree of $P_{j_1}$ and hence by $d$. Additionally, if $t_{j_2} \in P_{i}$, then the degree of $t_it_{j_2}$ is bounded by the quadratic degree of $P_{i}$ and is also bounded by $d$. In the case where $t_{j_2} \not \in P_{i}$, it means that it was cancelled in the sum and therefore $t_{j_2} \in P_{j_1}$ and so degree of $t_it_{j_2}$ is bounded by the Quadratic degree of $P_{j_1}$ and is again bounded by $d$.
		
		Thus all lines in $\Pi'$ can be derived from previous lines and axioms of $F$ in degree at most 2$\max(d,d_0)$. Since the last line of $\Pi'$ is $1$, we get that $\Pi'$ can be successfully transformed into a valid proof of $F$ of degree 2$\max(d,d_0)$.
	\end{proof}

\bibliography{document.bib}

\begin{thebibliography}{dRLNS21}

\bibitem[AR01]{alekh-raz}
Michael Alekhnovich and Alexander~A. Razborov.
\newblock Lower bounds for {P}olynomial {C}alculus: Non-binomial case.
\newblock In {\em 42nd Annual Symposium on Foundations of Computer Science,
  {FOCS} 2001, 14-17 October 2001, Las Vegas, Nevada, {USA}}, pages 190--199.
  {IEEE} Computer Society, 2001.

\bibitem[BG01]{bonet-galesi}
Maria~Luisa Bonet and Nicola Galesi.
\newblock Optimality of size-width tradeoffs for resolution.
\newblock {\em computational complexity}, 10:261--276, 2001.

\bibitem[BGIP01]{bgip}
Sam Buss, Dima Grigoriev, Russell Impagliazzo, and Toniann Pitassi.
\newblock Linear gaps between degrees for the {P}olynomial {C}alculus modulo
  distinct primes.
\newblock {\em Journal of Computer and System Sciences}, 62(2):267--289, 2001.

\bibitem[CEI96]{cei}
Matthew Clegg, Jeffery Edmonds, and Russell Impagliazzo.
\newblock Using the {G}roebner basis algorithm to find proofs of
  unsatisfiability.
\newblock In {\em Proceedings of the twenty-eighth annual ACM symposium on
  Theory of computing}, pages 174--183, 1996.

\bibitem[DMM23]{dmm23}
Yogesh Dahiya, Meena Mahajan, and Sasank Mouli.
\newblock New lower bounds for polynomial calculus over non-boolean bases.
\newblock {\em Electron. Colloquium Comput. Complex.}, {TR23-132}, 2023.

\bibitem[dRLNS21]{negreas}
Susanna~F de~Rezende, Massimo Lauria, Jakob Nordstr{\"o}m, and Dmitry Sokolov.
\newblock The power of negative reasoning.
\newblock In {\em 36th Computational Complexity Conference (CCC 2021)}. Schloss
  Dagstuhl-Leibniz-Zentrum f{\"u}r Informatik, 2021.

\bibitem[GL10]{galesi2010}
Nicola Galesi and Massimo Lauria.
\newblock Optimality of size-degree tradeoffs for {P}olynomial {C}alculus.
\newblock {\em ACM Transactions on Computational Logic (TOCL)}, 12(1):1--22,
  2010.

\bibitem[IMP20]{imp20}
Russell Impagliazzo, Sasank Mouli, and Toniann Pitassi.
\newblock The surprising power of constant depth algebraic proofs.
\newblock In {\em Proceedings of the 35th Annual ACM/IEEE Symposium on Logic in
  Computer Science}, pages 591--603, 2020.

\bibitem[IMP23]{imp23}
Russell Impagliazzo, Sasank Mouli, and Toniann Pitassi.
\newblock Lower bounds for {P}olynomial {C}alculus with extension variables
  over finite fields.
\newblock In {\em 38th Computational Complexity Conference (CCC 2023)}. Schloss
  Dagstuhl-Leibniz-Zentrum f{\"u}r Informatik, 2023.

\bibitem[IPS99]{ips99}
Russell Impagliazzo, Pavel Pudl{\'a}k, and Jiri Sgall.
\newblock Lower bounds for the {P}olynomial {C}alculus and the {G}r{\"o}bner
  basis algorithm.
\newblock {\em Computational Complexity}, 8:127--144, 1999.

\bibitem[Kri85]{krish}
Balakrishnan Krishnamurthy.
\newblock Short proofs for tricky formulas.
\newblock {\em Acta informatica}, 22:253--275, 1985.

\bibitem[MN15]{miksanord}
Mladen Miksa and Jakob Nordstr{\"{o}}m.
\newblock A generalized method for proving {P}olynomial {C}alculus degree lower
  bounds.
\newblock In {\em 30th Conference on Computational Complexity, {CCC} 2015},
  volume~33 of {\em LIPIcs}, pages 467--487, 2015.

\bibitem[Raz]{razbopen}
Alexander Razborov.
\newblock Alexander razborov: Pedagogical.
\newblock {\em http://people.cs.uchicago.edu/$\sim$razborov/teaching/}.

\bibitem[Raz98]{razbpc}
Alexander~A Razborov.
\newblock Lower bounds for the {P}olynomial {C}alculus.
\newblock {\em Computational Complexity}, 7:291--324, 1998.

\bibitem[Sok20]{sok20}
Dmitry Sokolov.
\newblock ({S}emi){A}lgebraic proofs over $\{$$\pm$1$\}$ variables.
\newblock In {\em Proceedings of the 52nd Annual ACM SIGACT Symposium on Theory
  of Computing}, pages 78--90, 2020.

\bibitem[St{\aa}96]{stal}
Gunnar St{\aa}lmarck.
\newblock Short resolution proofs for a sequence of tricky formulas.
\newblock {\em Acta Informatica}, 33:277--280, 1996.

\end{thebibliography}
\bibliographystyle{alpha}

\end{document}